\titleformat*{\section}{\large\bfseries}
\newtheorem{theorem}{Theorem}
\newtheorem{proposition}[theorem]{Proposition}
\newtheorem{lemma}[theorem]{Lemma}
\theoremstyle{definition}
\newtheorem{definition}[theorem]{Definition}
\newtheorem*{problem*}{Problem}
\newtheorem{remark}[theorem]{Remark}
\DeclarePairedDelimiter\ceil{\lceil}{\rceil}
\DeclareMathOperator{\supprank}{R_s}
\DeclareMathOperator{\bsupprank}{\underline{R}_s}
\DeclareMathOperator{\rank}{R}
\DeclareMathOperator{\borderrank}{\underline{R}}
\newcommand{\Oh}{\mathcal{O}}
\newcommand{\CC}{\mathbf{C}}
\newcommand{\NN}{\mathbf{N}}
\newcommand{\GHZ}{\mathrm{GHZ}}
\newcommand{\RR}{\mathbf{R}}
\newcommand{\IMM}{\mathrm{IMM}}
\newcommand{\EQ}{\mathrm{EQ}}
\DeclareMathOperator{\NQ}{\mathrm{NQ}}
\DeclareMathOperator{\N}{\mathrm{N}}
\DeclareMathOperator{\borderNQ}{\underline{\NQ}}
\newcommand{\ketbra}[2]{\ket{#1}\!\!\bra{#2}}
\DeclareMathOperator{\Id}{Id}
\DeclareMathOperator{\tr}{tr}
\DeclareMathOperator{\Cyc}{Cyc}
\DeclareMathOperator{\Sym}{Sym}
\newcommand{\VP}{\mathrm{VP}}
\newcommand{\VNP}{\mathrm{VNP}}
\newcommand{\VPe}{\mathrm{VP}_{\!\mathrm{e}}}
\newcommand{\VQP}{\mathrm{VQP}}
\DeclareMathOperator{\rk}{rk}
\newcommand{\Str}{\mathrm{Str}}
\begin{document}

\vspace*{2em}
\begin{center}
\Large\textbf{Nondeterministic quantum communication complexity:\\[0.2em] the cyclic equality game and iterated matrix multiplication}\par
\vspace{1em}
\large Harry Buhrman\footnote{QuSoft, CWI Amsterdam and University of Amsterdam, Science Park 123, 1098 XG Amsterdam, Netherlands. Email: buhrman@cwi.nl, j.zuiddam@cwi.nl}, Matthias Christandl\footnote{Department of Mathematical Sciences, University of Copenhagen, Universitetsparken 5,  2100~Copenhagen Ø, Denmark. Email: christandl@math.ku.dk}, Jeroen Zuiddam${}^1$\par
\vspace{1em}
\today
\end{center}
\vspace{1em}

\begin{abstract}
We study nondeterministic multiparty quantum communication with a quantum generalization of broadcasts. We show that, with number-in-hand classical inputs, the communication complexity of a Boolean function in this communication model equals the logarithm of the \emph{support rank} of the corresponding tensor, whereas the approximation complexity in this model equals the logarithm of the \emph{border support rank}. This characterisation allows us to prove a log-rank conjecture posed by Villagra et~al.\ for nondeterministic multiparty quantum communication with message passing.

The support rank characterization of the communication model connects quantum communication complexity intimately to the theory of asymptotic entanglement transformation and algebraic complexity theory. In this context, we introduce the \emph{graphwise equality problem}. For a cycle graph, the complexity of this communication problem is closely related to the complexity of the computational problem of multiplying matrices, or more precisely, it equals the logarithm of the support rank of the iterated matrix multiplication tensor. We employ Strassen's laser method to show that asymptotically there exist nontrivial protocols for every odd-player cyclic equality problem. We exhibit an efficient protocol for the 5-player problem for small inputs, and we show how Young flattenings yield nontrivial complexity lower bounds.
\end{abstract}

\section{Introduction}

Let $f : X\times Y \times Z \to \{0,1\}$ be a function on a product of  finite sets $X$, $Y$ and $Z$. Alice, Bob and Charlie have to compute~$f$ in the following sense. Alice receives an $x\in X$, Bob receives a $y\in Y$ and Charlie receives a $z\in Z$, and each player receives a private random bit string. Then the players communicate in rounds. Each round, one player communicates by broadcasting a bit to the other players. After these rounds of communication, each player has to output a bit, such that if $f(x,y,z) = 1$, then with some nonzero probability all players output 1 and if $f(x,y,z)=0$, then with probability zero all players output 1. The complexity of such a protocol is the number of broadcasts in the protocol, and we denote the minimum complexity of all such protocols by $\N(f)$.

Now we allow the players to be quantum, as follows. Alice receives an $x\in X$, Bob receives a $y\in Y$ and Charlie receives a $z\in Z$. Then, in rounds, the players communicate by creating a $\GHZ$-like state
\[
\ket{\GHZ} = \alpha\ket{000} + \beta\ket{111}
\]
and sharing this state among each other, a quantum broadcast. Moreover, the players can do any local quantum computations. Again, after these rounds of communication, each player has to output a bit, such that if $f(x,y,z) = 1$, then with some nonzero probability all players output~1 and if $f(x,y,z)=0$, then with probability zero all players output 1. The quantum complexity of such a quantum protocol is the number of broadcasts in the protocol, and we denote the minimum complexity of all quantum protocols by $\NQ(f)$. We will make this definition more precise and more general in \cref{model}. Note that the quantum model can simulate the classical model. Also note that, nondeterministically, one quantum broadcast can be used to send a qubit from one player to another by using teleportation (see \cref{logrank}); the quantum model can thus simulate a message-passing model.

\paragraph{Our results.} 
\begin{itemize}
\item
Our main technical result is that the quantum complexity of a function in the above model equals the logarithm of the so-called \emph{support rank} of the tensor $\sum_{x,y,z} f(x,y,z)\ket{x}\!\ket{y}\!\ket{z}$ corresponding to $f$. We prove this in \cref{model}.
\item Modifying the quantum model such that the players can only communicate by message passing --- that is, in each communication round one player sends a qubit to one other player --- increases the complexity by at most a factor $k-1$, and this relationship is tight. However, asymptotically in the input size, the increase is only $k/2$ and this relationship is tight. This solves a \emph{nondeterministic multiplayer quantum log-rank conjecture} in the message-passing model of Villagra et al.~\cite{villagra2013tensor}. This topic is covered in \cref{logranksec}.
\item We define the $k$-player \emph{graphwise equality} problem to be the problem in which $k$ players are identified with vertices in a graph $G$, and each player has to compute the equality function with his neighbours in $G$. Of particular interest is the cycle graph $G = C_k$ and the corresponding \emph{cyclic equality} problem. For this cyclic equality problem, in the classical broadcast model, the naïve protocol in which every player broadcasts his inputs is the optimal protocol. The same holds in the quantum model when $k$ is even.  Interestingly, we show  with Strassen's laser method that for all odd $k \geq 3$ there is a nontrivial quantum protocol. Moreover, for all odd $k\geq 3$ we give nontrivial lower bounds on the value of $\NQ$ by use of Young flattenings. These results are related to the complexity of matrix multiplication and iterated matrix multiplication. A consequence of our work is that finding new protocols for the cyclic equality problem for three players yields new algorithms for matrix multiplication. \cref{cep} covers the classical case, the even quantum case, an explicit quantum protocol for $k=5$, and the Young flattening lower bound. \cref{lasersec} covers the Strassen laser method.
\end{itemize}

\paragraph{Related work.} The two-player nondeterministic quantum communication model was introduced by~De Wolf~\cite{de2003nondeterministic}. He shows that the communication complexity in this model is characterized by the logarithm of the support rank of the communication matrix. Quantum broadcast channels have been studied by e.g.~Ambainis~et~al.~\cite{ambainis2004multiparty}.
Multiparty nondeterministic quantum communication with message passing has been studied by Villagra et~al.~\cite{villagra2013tensor}. They show that the logarithm of the support rank of the communication tensor is a lower bound for the message-passing complexity and conjecture that this lower bound is polynomially related to the message-passing complexity.

The support rank of 3-tensors has been studied by Cohn and Umans in the context of the complexity of matrix multiplication \cite{cohn2013fast}. They give nontrivial upper bounds on the support rank of the matrix multiplication tensor that do not come from upper bounds on the tensor rank. As an interesting fact, we note that given a matrix $A$ and a number $k$, deciding whether the support rank of $A$ is at least $k$ is NP-hard \cite{bhangale2015complexity}.

The complexity of matrix multiplication plays a central role in algebraic complexity theory. We refer to \cite{burgisser1997algebraic} for general background information. Connections between algebraic complexity theory and entanglement transformations have been studied before, see  for example \cite{chitambar2008tripartite}. The iterated matrix multiplication tensor has been studied in the context of arithmetic circuit complexity and the $\VP$ versus $\VNP$ problem, see for example \cite{gesmundo2015gemetric}. To the knowledge of the authors, the tensor rank or support rank of the iterated matrix multiplication tensor has not been studied before.

\paragraph{Acknowledgements.} We thank Peter Bürgisser, Péter Vrana, Florian Speelman and Teresa Piovesan for helpful discussions. Part of this work was done while MC and JZ were visiting the Simons Institute for the Theory of Computing, UC Berkeley. HB was partially funded by the European Commission, through the SIQS project and by the Netherlands Organisation for Scientific Research (NWO) through gravitation grant Networks. MC acknowledges financial support from the European Research Council (ERC Grant Agreement no 337603), the Danish Council for Independent Research (Sapere Aude) and the Swiss National Science Foundation (project no PP00P2\_150734). Part of this work was done while MC was with ETH Zurich. JZ is supported by NWO through the research programme 617.023.116 and by the European Commission through the SIQS project.

\section{Support rank characterization of the quantum broadcast model}\label{model}

We refer to Nielsen and Chuang \cite{nielsen2010quantum} for background information on the quantum computation model.

\paragraph{Quantum multiparty communication protocol.} We will give two definitions of a quantum broadcast model, which are equivalent in the nondeterministic setting. The first model clearly generalizes the classical broadcast model, while the second model is easier to analyse.
For any natural number~$m$, denote by $[m]$ the set $\{1,2, \ldots, m\}$. Let $k$ be a positive integer and let $f$ be a Boolean function on $[2^n]^{k} = [2^n] \times [2^n] \times \cdots\times [2^n]$, 
\[
f: [2^n]^k \to \{0,1\}.
\]

We define a \emph{$k$-player quantum communication protocol} as follows. Each player $i$ has a local Hilbert space $H_i$ with a register initialised in the input state $\ket{x_i}$. The players have access to a quantum broadcast channel, which, given a qubit state $\alpha\ket{0} + \beta\ket{1}$, will create the state $\alpha\ket{0}^{\otimes k} + \beta\ket{1}^{\otimes k}$ and distribute this state among the $k$ players. The players proceed in communication rounds; each round a designated player uses the broadcast channel.
Let $R_i$ be the first qubit of $H_i$ and let $R = R_1 \otimes \cdots \otimes R_k$. After the communication is finished, we apply a projection onto $\ket{11\cdots1}$ in $R$. If the resulting tensor is 0 then the output of the protocol is 0, otherwise the output of the protocol is~1. The complexity of the protocol is the number of communication rounds. 
We say the protocol nondeterministically computes $f$ if the probability that the output equals~1 is nonzero if $f(x_1,\ldots,x_k) = 1$ and the probability that the output equals 0 is one if $f(x_1, \ldots, x_k) = 0$.

We will now give an equivalent definition of the quantum broadcast model. This is the definition that we will use in the rest of the paper.
Each player $i$ has a finite-dimensional Hilbert space $H_i$. 
The protocol thus takes place in the space $H_1 \otimes \cdots \otimes H_k$. 
The space is initialised in the state $\ket{x_1\cdots x_k}\ket{\GHZ^k_r}$, where
\[
\ket{\GHZ^k_r} \coloneqq \sum_{a=1}^{\smash{r}} \ket{a}\!\ket{a} \cdots \ket{a} \in (\CC^{r})^{\otimes k}
\]
is the $k$-party $\GHZ$-state of rank $r$, shared among the $k$ players, and $x_i\in [2^n]$ is the classical input to player $i$. (For clarity we will suppress any normalizations in quantum states when possible.)  
The players now apply local quantum operations.
Let $R_i$ be the first qubit of $H_i$ and let $R = R_1 \otimes \cdots \otimes R_k$. We apply a projection onto $\ket{11\cdots1}$ in $R$. If the resulting tensor is 0 then the output of the protocol is 0, otherwise the output of the protocol is~1. The \emph{complexity} of the protocol is $\log_2(r)$. 
We say the protocol \emph{nondeterministically computes} $f$ if the probability that the output equals~1 is nonzero if $f(x_1,\ldots,x_k) = 1$ and the probability that the output equals 0 is one if $f(x_1, \ldots, x_k) = 0$.

\begin{definition}
Let $k$ be a positive integer and let $f$ be a function $[2^n]^k \to \{0,1\}$. The \emph{$k$-player nondeterministic quantum communication complexity of $f$} is the minimal complexity of a $k$-player quantum communication protocol that nondeterministically computes $f$, and is denoted by $\NQ(f)$.
\end{definition}

\paragraph{Approximating protocols.} Let $f$ be a function $[2^n]^k\to\{0,1\}$.  Let $(\Pi_j)_{j\in \NN}$ be a sequence of protocols, such that when $f(x_1,\ldots,x_k) = 1$, the probability that $\Pi_j$ outputs~1 on input $x$ converges to a nonzero number as $j$ goes to infinity, and when $f(x_1,\ldots, x_k) = 0$, the probability that $\Pi_j$ outputs 0 on input $x$ converges to 1 as $j$ goes to infinity. Then we say that the sequence $(\Pi_j)_{j\in \NN}$ \emph{approximately nondeterministically computes} $f$. The complexity of an approximating sequence is the maximum complexity of any protocol $\Pi_j$ in the sequence.

\begin{definition}
The \emph{$k$-player approximate nondeterministic quantum communication complexity of $f$} is the minimal complexity of a sequence $(\Pi_j)$ that approximately nondeterministically computes $f$, and is denoted by $\borderNQ(f)$.
\end{definition}

\paragraph{Classical protocol.} We define a \emph{$k$-player classical communication protocol} as follows. Each player receives a classical input and a private random bit string. The protocol proceeds in rounds. Each round we let a single predetermined player communicate by broadcasting a bit to all the other players. After the last communication round, every player presents an output bit. If all the output bits are 1, then the output of the protocol is 1; otherwise the output of the protocol is 0. Again, we say the classical protocol \emph{nondeterministically computes $f$} if the probability that the output equals 1 is nonzero if $f(x_1,\ldots,x_k) = 1$ and the probability that the output equals $0$ is one if $f(x_1, \ldots,x_k) = 0$.

\begin{definition}
The \emph{$k$-player nondeterministic classical communication complexity of $f$} is the minimal complexity of a $k$-player classical communication protocol that nondeterministically computes $f$, and is denoted by $\N(f)$.
\end{definition}

\begin{remark}
For simplicity, we have taken the input set for each of the $k$ players to be the same set~$[2^n]$. We note that the definitions in this section and most of the results in this paper naturally generalize to the situation where the players get inputs from sets of different sizes.
\end{remark}

\paragraph{Support rank and border support rank.}
Let $t$ be a tensor in $(\CC^m)^{\otimes k}$. The \emph{tensor rank} of $t$ is the smallest number $r$ such that~$t$ can be written as a sum of $r$ simple tensors, that is,  $t = \sum_{i=1}^r u^1_i \otimes u^2_i \otimes \cdots \otimes u^k_i$ for some vectors $u^j_i \in \CC^m$. We denote the tensor rank of $t$ by $\rank(t)$. Fix a basis for $(\CC^m)^{\otimes k}$ and define the support of a tensor $t$ in $(\CC^m)^{\otimes k}$ to be the set of basis element that occur with nonzero coefficient in $t$.  The \emph{support rank} or \emph{nondeterministic rank} of $t$ is the smallest number $r$ such that there exists a tensor in the space $(\CC^m)^{\otimes k}$ with the same support as $t$ and tensor rank $r$. We denote the support rank of $t$ by $\supprank(t)$. Note that support rank is basis \emph{dependent}.

The \emph{border rank} of $t$ is the smallest number $r$ such that there exists a sequence of tensors $(t_j)_{j\in \NN}$ converging to $t$ in the Euclidean topology (or equivalently in the Zariski topology) such that $\rank(t_j)$ is at most $r$ for every $j$. We denote the border rank of $t$ by~$\borderrank(t)$. The \emph{border support rank} of $t$ is the smallest number $r$ such that there exists a tensor in $(\CC^m)^{\otimes k}$ with the same support as $t$ and border rank $r$. We denote the border support rank of $t$ by $\bsupprank(t)$.

\begin{theorem}\label{mainth}
Let $f: [2^n]^k \to \{0,1\}$ be a function and let $t$ be the tensor in $(\CC^{2^n})^{\otimes k}$ with entries given by $f$, that is, $t = \sum_{i\in [2^n]^k} f(i) \ket{i_1}\!\ket{i_2}\cdots \ket{i_k}$. Then $\NQ(f) = \log_2 \supprank(t)$ and $\borderNQ(f) = \log_2 \bsupprank(t)$.
\end{theorem}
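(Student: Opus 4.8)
The plan is to prove the two exact identities by a dictionary between protocols of complexity \(\log_2 r\) and rank-at-most-\(r\) decompositions of tensors with support \(\supp(t)\), and then to obtain the border statements by running the same dictionary over a family and passing to limits. Since every quantity we consider depends only on whether an acceptance probability is nonzero (or on the limit of such probabilities), I will freely assume each player applies an isometry: any local quantum operation dilates to one on a larger (still finite-dimensional) space while keeping its environment, and dilation does not change the GHZ rank, hence not the complexity. In this normal form the state produced on classical input \(x=(x_1,\dots,x_k)\) is \(\sum_{a=1}^{r}\bigotimes_i V_i(\ket{x_i}\ket{a})\) (sum over the shared GHZ value \(a\)), the resulting tensor is the genuine vector \(\Psi_x:=\sum_{a=1}^{r}\bigotimes_i w_a^i(x_i)\) with \(w_a^i(x_i)\) the part of \(V_i(\ket{x_i}\ket{a})\) on which \(R_i=\ket{1}\), and the acceptance probability on input \(x\) is \(\lVert\Psi_x\rVert^2\); so a protocol nondeterministically computes \(f\) exactly when \(\Psi_x\neq 0 \Leftrightarrow f(x)=1\).

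For \(\NQ(f)\le\log_2\supprank(t)\), fix \(s=\sum_{a=1}^{r}u_a^1\otimes\cdots\otimes u_a^k\) with \(\supp(s)=\supp(t)\) and \(r=\supprank(t)\); rescale the vectors so that the \(2^n\times r\) matrix \(K_i\) with columns \(u_1^i,\dots,u_r^i\) has operator norm at most one (this multiplies \(s\) by a nonzero scalar, hence preserves \(\supp(s)\)). The protocol: players share \(\ket{\GHZ^k_r}\); player \(i\) applies, on its GHZ register and fresh registers \(S_i,E_i\), the Stinespring dilation of the operation with Kraus operator \(K_i\) that writes \(u_a^i\) into \(S_i\) while \(E_i\) records the branch, and then — controlled on \(E_i\) being in the distinguished branch and on \(S_i\) holding the value \(x_i\) — sets \(R_i:=\ket{1}\). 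Projecting \(R\) onto \(\ket{1\cdots1}\) forces every \(E_i\) into the distinguished branch and every \(S_i\) to \(\ket{x_i}\), leaving the resulting tensor \(\bigl(\sum_{a=1}^{r}\prod_i(u_a^i)_{x_i}\bigr)\ket{\Phi_x}=s_x\,\ket{\Phi_x}\) for a fixed nonzero product vector \(\ket{\Phi_x}\); this is nonzero exactly when \(f(x)=1\), and the complexity is \(\log_2 r\). (The GHZ value must be erased through the nonunitary \(K_i\) rather than merely copied: otherwise the resulting tensor would retain the summation index \(a\) and the sum over \(a\) could not collapse to the scalar \(s_x\).)

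For \(\NQ(f)\ge\log_2\supprank(t)\), take a protocol of complexity \(\log_2 r\); in the normal form above \(\Psi_x=\sum_{a=1}^{r}w_a^1(x_1)\otimes\cdots\otimes w_a^k(x_k)\) with \(\Psi_x\neq 0\Leftrightarrow f(x)=1\). Assemble \(S:=\sum_{a=1}^{r}\bigl(\sum_{x_1}\ket{x_1}\otimes w_a^1(x_1)\bigr)\otimes\cdots\otimes\bigl(\sum_{x_k}\ket{x_k}\otimes w_a^k(x_k)\bigr)\), a tensor of rank at most \(r\) whose \(x\)-block is exactly \(\Psi_x\). Contract the ancilla factor of each tensor leg against a covector \(\xi_i\): since no nonzero tensor is annihilated by every product covector, for each of the finitely many \(x\) with \(\Psi_x\neq 0\) the tuples \((\xi_1,\dots,\xi_k)\) with \((\xi_1\otimes\cdots\otimes\xi_k)(\Psi_x)=0\) form a proper subvariety, so a generic choice avoids all of them simultaneously. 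The result \(s\in(\CC^{2^n})^{\otimes k}\) then has \(\supp(s)=\supp(t)\) and \(\rank(s)\le\rank(S)\le r\), whence \(\supprank(t)\le r\).

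The border statements come from repeating these reductions over families. For \(\borderNQ(f)\ge\log_2\bsupprank(t)\): given an approximating sequence \((\Pi_j)\) of complexity \(\le\log_2 r\), put each \(\Pi_j\) in normal form; for fixed \(j\) the vectors \(w_a^i(x_i;j)\) (of norm \(\le1\)) span a space of dimension \(\le r2^n\), so after isometrically embedding into a fixed finite-dimensional space and passing to a subsequence they converge; the limiting assembled tensor \(S^\infty\) is a limit of rank-\(\le r\) tensors, so \(\borderrank(S^\infty)\le r\), and its \(x\)-block \(\Psi_x^\infty\) satisfies \(\lVert\Psi_x^\infty\rVert^2=\lim_j\lVert\Psi_x^{(j)}\rVert^2\), which is nonzero iff \(f(x)=1\); contracting with a generic covector as before yields \(s\) with \(\supp(s)=\supp(t)\) and \(\borderrank(s)\le r\). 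For \(\borderNQ(f)\le\log_2\bsupprank(t)\) one approximates a witness \(s\) (with \(\supp(s)=\supp(t)\), \(\borderrank(s)\le r\)) by rank-\(\le r\) tensors and runs the protocol constructed above on each. This is the delicate step and the main obstacle: a border-rank-\(r\) expansion of \(s\) necessarily has vectors whose norms blow up (since \(\rank(s)>r\) whenever \(\bsupprank(t)<\supprank(t)\)), so the rescalings needed to keep the local operations physical would drive the acceptance probabilities to zero even where \(f(x)=1\); one must instead organize the approximating family along the order-of-vanishing filtration of the border-rank expansion so that the acceptance probabilities converge to the prescribed nonzero/zero values. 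Making this precise — together with the routine equivalence of the two communication models of the preceding section — is where the real work lies.
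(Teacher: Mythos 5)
Your treatment of the exact statement $\NQ(f)=\log_2\supprank(t)$ is correct and is essentially the paper's own proof. For the upper bound, your Kraus operator $K_i$ with columns $u^i_a$, dilated and equipped with a branch flag, is exactly the paper's map $A_j=\alpha_j\sum_i \ketbra{u^j_i}{i}$ together with its control qubit; for the lower bound, applying the protocol's local linear map to $\sum_x\ket{x_1\cdots x_k}\ket{\GHZ_r}$, assembling a rank-$\le r$ tensor whose $x$-blocks are the accepted states, and then removing the workspace by a product covector that keeps every nonzero block nonzero is the paper's argument verbatim; the only difference is that the paper realizes the covector by an explicit recursive ``cleanup lemma'' while you invoke genericity (the annihilating product covectors of each nonzero block form a proper subvariety), which is a perfectly adequate substitute.

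The gap is the second half of the statement, $\borderNQ(f)=\log_2\bsupprank(t)$, which you do not prove: you explicitly leave the direction $\borderNQ(f)\le\log_2\bsupprank(t)$ open (``where the real work lies''), so the theorem as stated is not established. Worse, your two border arguments cannot coexist under a single reading of the approximation requirement. In your normal form the acceptance probability is exactly $\|\Psi_x\|^2$ for a block of a rank-$\le r$ assembled tensor whose local vectors have norm at most $1$; your compactness argument then produces a limit tensor that is itself a sum of $r$ product tensors, i.e.\ of rank at most $r$ (not merely border rank at most $r$), with support equal to $\supp(t)$. Taken literally this would prove $\borderNQ(f)\ge\log_2\supprank(t)$ and hence contradict the equality you are aiming for whenever $\bsupprank(t)<\supprank(t)$ (for instance the $3$-player function on one bit whose tensor has the W-state support, where the support rank is $3$ but the border support rank is $2$). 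Conversely, under the weaker reading of ``approximately nondeterministically computes'' that would allow your rescaled border-rank decompositions to qualify --- acceptance probability nonzero on $1$-inputs for each protocol in the sequence but possibly tending to $0$ --- your compactness argument no longer forces the limit blocks on $1$-inputs to be nonzero, so the $\ge$ direction as you wrote it breaks. The paper disposes of the border case in one sentence (``follows from the definition of border support rank''), i.e.\ the intended semantics of the approximating model is exactly the one under which the protocol/decomposition dictionary passes to limits; your proposal neither pins down such a semantics nor completes either border direction under it, and this is a genuine missing piece rather than a technicality.
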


\begin{lemma}[Cleanup lemma]\label{cleanup} Let $\{\ket{\psi_i}  : i\in [q]\}\subseteq (\CC^m)^{\otimes k}$ be a set of $k$-tensors, for some natural number $q$. Then there exists a $k$-partite rank-1 linear map $\bra{\ell}\coloneqq\bra{\ell_1}\otimes \cdots \otimes \bra{\ell_k}$ with $\bra{\ell_j}\in (\CC^m)^*$ such that $\braket{\ell|\psi_i}\neq 0$ for every $i\in[q]$.
\end{lemma}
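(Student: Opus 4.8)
The plan is to prove this by a simple dimension-counting / genericity argument, reducing to a single polynomial non-vanishing statement. First I would observe that for each fixed $i\in[q]$, the function $(\bra{\ell_1},\ldots,\bra{\ell_k})\mapsto \braket{\ell_1\otimes\cdots\otimes\ell_k\,|\,\psi_i}$ is a nonzero multilinear polynomial on $((\CC^m)^*)^{\times k}$, \emph{provided} $\ket{\psi_i}\neq 0$; indeed, writing $\ket{\psi_i}=\sum_{a_1,\ldots,a_k} c^{(i)}_{a_1\cdots a_k}\ket{a_1}\cdots\ket{a_k}$, the contraction is $\sum c^{(i)}_{a_1\cdots a_k}\,\ell_1(a_1)\cdots\ell_k(a_k)$, which is not identically zero since some coefficient $c^{(i)}_{a_1\cdots a_k}$ is nonzero. (If some $\ket{\psi_i}=0$ the statement is vacuously impossible to satisfy for that $i$, so I would add the harmless hypothesis that each $\ket{\psi_i}$ is nonzero, or simply note that the lemma is only ever applied to nonzero tensors; in the paper's usage this is automatic.)

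Next I would form the product polynomial $P(\bra{\ell_1},\ldots,\bra{\ell_k}) \coloneqq \prod_{i=1}^q \braket{\ell_1\otimes\cdots\otimes\ell_k\,|\,\psi_i}$. Since $\CC$ is an infinite field and each factor is a nonzero polynomial, $P$ is a nonzero polynomial on the affine space $((\CC^m)^*)^{\times k}$. Hence its zero set is a proper Zariski-closed subset, and in particular there exists a point $(\bra{\ell_1},\ldots,\bra{\ell_k})$ at which $P$ does not vanish — equivalently, at which every factor $\braket{\ell\,|\,\psi_i}$ is nonzero simultaneously. Setting $\bra{\ell}=\bra{\ell_1}\otimes\cdots\otimes\bra{\ell_k}$ gives the desired rank-one functional. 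One can even take the $\bra{\ell_j}$ to have entries in any prescribed infinite subset of $\CC$, or to be generic, which is occasionally convenient.

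There is essentially no hard part here; the only thing to be careful about is the non-degeneracy of each individual contraction, i.e. that a nonzero tensor cannot be annihilated by \emph{all} rank-one functionals. This is exactly the statement that the natural pairing between $(\CC^m)^{\otimes k}$ and the span of rank-one functionals $\bra{\ell_1}\otimes\cdots\otimes\bra{\ell_k}$ is nondegenerate on the left, which is immediate because those rank-one functionals span the full dual space $((\CC^m)^{\otimes k})^*$ (the elementary tensors $\bra{a_1}\otimes\cdots\otimes\bra{a_k}$ already form a basis of the dual). So if $\braket{\ell\,|\,\psi_i}=0$ for all rank-one $\bra{\ell}$, then $\ket{\psi_i}=0$, contradiction. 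Assembling these observations — nondegeneracy of each contraction, the product trick, and the fact that a finite product of nonzero polynomials over an infinite field is nonzero and hence has a non-root — completes the proof.
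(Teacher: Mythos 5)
Your proof is correct, and it takes a different (though philosophically related) route from the paper. The paper proves the lemma by an explicit recursive construction: it builds $\bra{\ell}$ one tensor leg at a time, and within each leg it picks functionals $\bra{u_i}$ hitting the partially contracted tensors and then chooses coefficients $\alpha_2,\alpha_3,\ldots$ sequentially, each time avoiding the (at most one) root of an affine map in a single variable — in effect a hands-on, constructive version of a genericity argument along a line. You instead argue globally in one shot: each contraction $(\bra{\ell_1},\ldots,\bra{\ell_k})\mapsto\braket{\ell_1\otimes\cdots\otimes\ell_k|\psi_i}$ is a nonzero multilinear polynomial (this is exactly the nondegeneracy point you isolate, and it is right: the elementary functionals span the full dual), the product over $i$ is a nonzero polynomial since $\CC[x_1,\ldots,x_N]$ is a domain, and a nonzero polynomial over an infinite field has a non-vanishing point. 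Your version is shorter, works verbatim over any infinite field, and lets you take the $\bra{\ell_j}$ generic or with entries in a prescribed infinite set; the paper's version yields an explicit finite procedure for finding $\bra{\ell}$ but is more laborious to state. You are also right to flag the hypothesis that each $\ket{\psi_i}$ be nonzero: the lemma as stated omits it, the paper's proof uses it implicitly (its recursion needs each partially contracted $\ket{\phi_i}$ to be nonzero), and in the application inside the main theorem the tensors involved are indeed nonzero, so the omission is harmless — but your explicit remark is an improvement rather than a gap.
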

\begin{proof}
We will give a proof by recursively constructing $\bra{\ell}$. Let $\Id$ be the identity map on $\CC^m$. If $j\leq k$,  $\bra{a}\in ((\CC^m)^*)^{\otimes j}$ and $\ket{b} \in (\CC^m)^{\otimes k}$, then we denote by $\braket{a|b}$ the contraction of $\bra{a}$ and $\ket{b}$, that is, $\braket{a|b} = (\bra{a} \otimes \Id^{\otimes k-j})\ket{b}$. 

The base case is $\bra{\ell} = 1$. For the recursion, suppose we are given an element $\bra{\ell'} \in ((\CC^m)^*)^{\otimes j}$ such that $\ket{\phi_i}\coloneqq \braket{\ell|\psi_i}$ is nonzero for every $i\in [q]$. We will construct an element $\bra{\ell}\in((\CC^m)^*)^{\otimes j+1}$ such that $\braket{\ell|\psi_i}$ is nonzero for every $i\in [q]$. Since $\ket{\phi_i}$ is nonzero for every $i\in [q]$, there is an element $\bra{u_i}\in (\CC^m)^*$ such that $\braket{u_i|\phi_i}$ is nonzero. Consider the the maps $(\bra{u_1} + x \bra{u_2})\ket{\phi_i}$ for $i\in \{1,2\}$, in the variable $x$. Each map only has a single root. Therefore, there exists a value $\alpha_2$ for $x$ such that both maps evaluate to a nonzero number. Next, consider the maps $(\bra{u_1} + \alpha_2 \bra{u_2} + x\bra{u_3})\ket{\phi_i}$ for $i\in \{1,2,3\}$, in variable $x$. Again, each of the three maps has only a single root. Therefore, there exists a value $\alpha_3$ for $x$ such that all three maps evaluate to a nonzero number. Repeat this construction to obtain an element $\bra{u} \in (\CC^m)^*$ such that $\braket{u|\phi_i}$ is nonzero for every $i\in[q]$. Let $\bra{\ell}$ be $\bra{\ell'}\otimes \bra{u}$.
\end{proof}

\begin{proof}[\bfseries\upshape Proof of \cref{mainth}]
We first show $\NQ(f) \leq \log_2 \supprank(t)$. Let $r$ be the support rank of $t$. Then there exists a unit vector $\psi \in (\CC^{2^n})^{\otimes k}$ with rank $r$ and support equal to the support of $f$. This means that there are vectors $\ket{u^j_i} \in \CC^{2^n}$ such that $\psi = \sum_{i=1}^r \ket{u^1_i}\cdots \ket{u^k_i}$. For every player $j$ define a matrix
\[
A_j \coloneqq \alpha_j \sum_{i=1}^{\smash{r}} \ketbra{u^j_i}{i}
\]
where $\alpha_j$ is a nonzero complex number such that $\smash{A_j^\dag A_j}$ has eigenvalue at most 1. The matrix $I - \smash{A_j^\dag A_j}$ is thus positive semidefinite and hence there exists a matrix $A_j'$ such that $\smash{{A_j'}^\dag} A_j' = \smash{I-A_j^\dag A_j}$. Define for every player $j$ a quantum operation
\[
\mathcal{E}_j : \rho \mapsto A_j\rho A_j^\dag \otimes \ketbra{1}{1} + A_j' \rho {A_j'}^\dag \otimes \ketbra{0}{0}.
\]
Note that this operation introduces a new control qubit register which player $j$ can measure to see whether he applied $A_j$ or $A'_j$.

The protocol for the $k$ players is as follows. Let $x_1, \ldots, x_k$ be the inputs given to the players. The players share a $k$-party $\GHZ$-state of rank $r$. Player $j$ applies $\mathcal{E}_j$ to his part of the $\GHZ$-state. If his control qubit is $\ket{0}$ then he sets his output qubit $R_i$ to $\ket{0}$. Otherwise, he measures the rest of the system. If the outcome equals $\ket{x_j}$, then he sets $R_j$ to $\ket{1}$, otherwise he sets $R_j$ to $\ket{0}$.

The above protocol uses a $\GHZ$-state of rank $r$, so it has complexity $\log_2(r)$. We claim that the protocol nondeterministically computes $f$. If the players in the first measurement each get outcome $\ket{1}$, then the state of the total system is $\ket{\psi}$. Because $\ket{\psi}$ has norm 1, this happens with nonzero probability~$\abs{\alpha_1}^2\cdots\abs{\alpha_k}^2$. If $f(x_1,\ldots,x_k) = 0$, then $\ket{x_1\cdots x_k}$ does not occur in the support of $\psi$, so the probability that the players measure $\ket{x_1}, \ldots, \ket{x_k}$ respectively is zero. Hence in this case the register $R$ is not in state $\ket{11\cdots 1}$. On the other hand, if~$f(x_1, \ldots, x_k) \neq 0$, then $\ket{x_1\cdots x_k}$ does occur in the support of $\psi$, so the probability that the players measure $\ket{x_1},\ldots,\ket{x_k}$ respectively is nonzero. Hence with nonzero probability the register $R$ is in state $\ket{11\cdots1}$.

We now show $\NQ(f) \geq\log_2 \supprank(t)$. Suppose we have a protocol that nondeterministically computes $f$ with complexity~$r$. This means that the players perform \emph{local} quantum operations that together form a linear map $L$ which  transforms, for any $x_1,\ldots,x_k\in [2^n]$, the state
\begin{align*}
&\ket{x_1\cdots x_k}\ket{\GHZ_r}\\
\shortintertext{to a state of the form}
&\ket{x_1\cdots x_k} \sum_{a\in A}\, \ket{\psi_x^a}\,\ket{a_1}\!\ket{a_2}\cdots \ket{a_k},
\end{align*}
where the sum is over $A\coloneqq\{a\in \{0,1\}^k\mid f(x_1,\ldots,x_k) = a_1\cdot a_2\cdots a_k\}$ and where $\ket{\psi_x^a}$ is some nonzero vector, representing the state of the \emph{work space} of the players. Since the map~$L$ is \emph{linear}, it maps the tensor
\begin{align*}
&s_1\coloneqq \sum_{\mathclap{x_1,\ldots,x_k}} \ket{x_1\cdots x_k} \ket{\GHZ_r}\\
\shortintertext{to the tensor}
&s_2\coloneqq \sum_{\mathclap{x_1,\ldots,x_k}} \ket{x_1\cdots x_k} \sum_{a\in S} \ket{\psi_x^a}\ket{a_1\cdots a_k}.
\end{align*}
The tensor rank of $\sum_x \ket{x_1\cdots x_k}$ is 1 and hence the tensor rank of $s_1$ is $r$. Because~$L$ is a local map, the tensor rank of $s_2$ is at most $r$. By applying the cleanup lemma \cref{cleanup} and projecting on states with $\ket{a_1\cdots a_k} = \ket{1\cdots 1}$, we obtain a tensor
\[
s_3 \coloneqq \sum_{x_1,\ldots, x_k} \ket{x_1\cdots x_k} c_x
\]
where $c_x\in \CC$ is zero if $f(x) = 0$ and nonzero if $f(x) = 1$. The rank of the tensor $s_3$ is at most $r$. The support of $s_3$ equals the support of $f$, so the support rank of $f$ is at most~$r$.

The statement about the approximate complexity of $f$ follows from the definition of border support rank.
\end{proof}

\begin{remark}
We note that having a $\NQ$-protocol for $f$ of complexity $n$ is the same as having an SLOCC protocol for transforming $\GHZ_{2^n}^k$ to a tensor with the same support as $f$. We will use the SLOCC paradigm in some parts of the text.
\end{remark}

\section{Nondeterministic log-rank conjecture for message-passing protocols}\label{logranksec}

\begin{definition}
Let $\NQ_0(f)$ be the minimal complexity of a protocol that nondeterministically computes $f$, without preshared entanglement but with the added ability for players to send a qubit to another player. The complexity of such a protocol is the total number of qubits sent.
\end{definition}

Villagra et al.~\cite{villagra2013tensor} show that $\NQ_0(f)$ is at least the logarithm of the support rank of $f$. They furthermore conjecture that $\NQ_0(f)$ is upper bounded by a polynomial in the logarithm of the support rank. The following theorem proves this conjecture.

%\begin{lemma}[Teleportation]\label{reduction}
%If there exists an $\NQ_0$ protocol for $f$ with complexity~$c$, then there exists a protocol that nondeterministically computes $f$ with complexity $c$ in which the players preshare a $\GHZ$-state of rank $2^c$ and have no communication rounds.
%\end{lemma}
%\begin{proof}
%\end{proof}

\begin{theorem}[``Nondeterministic log-rank conjecture'']\label{logrank} Let $f: [2^n]^k\to \{0,1\}$. Then we have
$\NQ(f) \leq \NQ_0(f) \leq (k-1) \NQ(f)$.
\end{theorem}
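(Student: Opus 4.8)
The plan is to prove the two inequalities separately. The lower bound $\NQ(f)\le\NQ_0(f)$ expresses that the broadcast model is at least as powerful as message passing, and I would prove it by simulating a message-passing protocol with broadcasts using \emph{nondeterministic teleportation}. The upper bound $\NQ_0(f)\le(k-1)\NQ(f)$ I would prove by taking an optimal broadcast protocol, letting one distinguished player prepare its shared $\GHZ$-state entirely inside his own laboratory, and then sending the $k-1$ remaining subsystems to the other players.

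For $\NQ(f)\le\NQ_0(f)$, start from a message-passing protocol of cost $c=\NQ_0(f)$, so $c$ single qubits are transmitted in total. I replace each transmission, say the one from player $i$ to player $j$, by one broadcast of the rank-$2$ state $\ket{0}^{\otimes k}+\ket{1}^{\otimes k}$: the $k-2$ players other than $i,j$ project their qubit onto $\ket{+}$, which (up to a normalization that is irrelevant in the nondeterministic setting) leaves $i$ and $j$ sharing a maximally entangled pair; player $i$ then performs a Bell measurement on the qubit to be sent together with her half of the pair and \emph{postselects} on the outcome requiring no Pauli correction, so player $j$'s half becomes exactly the transmitted qubit and no classical message is needed. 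Doing this for all $c$ transmissions turns the protocol into one that first shares $c$ independent rank-$2$ $\GHZ$-states --- equivalently a single rank-$2^c$ $\GHZ$-state, after identifying $\{0,1\}^c$ with $[2^c]$ --- and afterwards uses only local operations and postselections, i.e.\ an SLOCC protocol (cf.\ the remark after \cref{mainth}) of complexity $\log_2(2^c)=c$. Alternatively, and more slickly, this inequality is immediate from \cref{mainth} together with the lower bound $\NQ_0(f)\ge\log_2\supprank(t)$ of Villagra et al.~\cite{villagra2013tensor}.

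For $\NQ_0(f)\le(k-1)\NQ(f)$, recall from \cref{mainth} --- and from the explicit protocol constructed in its proof --- that an optimal broadcast protocol may be taken to consist of sharing a single $\GHZ$-state of rank $r=\supprank(t)=2^{\NQ(f)}$ among the $k$ players, followed by purely local operations and a final postselection onto $\ket{1\cdots1}$. In the message-passing model, player $1$ first prepares the state $\sum_{a=1}^{r}\ket{a}^{\otimes k}$ on $k$ registers of dimension $r$, all living inside his own space (no preshared entanglement is used, only a local preparation), and then sends the $j$-th register to player $j$ for each $j=2,\dots,k$. Each register is a $\CC^r$-system, i.e.\ $\log_2 r$ qubits --- here, as for the broadcast model, we measure communication by $\log_2$ of the dimension, the minor discretization one gets by padding $r$ up to a power of two being immaterial. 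After these $k-1$ transmissions the players hold exactly the shared $\GHZ$-state of rank $r$, and they finish the protocol locally without any further communication, so $\NQ_0(f)\le(k-1)\log_2 r=(k-1)\NQ(f)$.

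The one conceptually nontrivial ingredient --- and the only place where the \emph{nondeterministic} nature of the model does real work --- is the teleportation step: ordinary deterministic teleportation needs a two-bit classical correction, which would add to the communication cost, whereas here one is free to postselect on the single Bell outcome for which no correction is required, so one maximally entangled pair, hence one rank-$2$ broadcast, suffices per transmitted qubit. Everything else is bookkeeping that is already available: that several $\GHZ$ broadcasts combine into one with the ranks multiplying (so complexities add) is precisely the equivalence of the two definitions of the broadcast model used in \cref{mainth}, and the qubit-versus-$\log_2$-dimension discretization is irrelevant for the log-rank conjecture. The matching examples showing both inequalities are tight, in the worst case and asymptotically, are established separately in the remainder of \cref{logranksec} and play no role here.
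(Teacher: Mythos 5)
Your proposal is correct and follows essentially the same route as the paper: the first inequality is proved by replacing each transmitted qubit with a nondeterministic (postselected) teleportation using one rank-2 broadcast, and the second by having player 1 locally prepare the rank-$r$ $\GHZ$-state and ship its $k-1$ remaining registers to the other players. Your extra remarks (spelling out the $\ket{+}$-projections of the bystanders, the rank-multiplication bookkeeping, and the padding/discretization issue) only make explicit what the paper leaves implicit.
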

\begin{proof}
For the first inequality, suppose we have an $\NQ_0$-protocol for $f$. We replace the communication of a qubit by the nondeterministic teleportation of that qubit. Beforehand, all players agree on the basis in which the teleportation should happen. If any teleportation during the protocol does not happen in this basis, then the player that notices this sets his output register $R_i$ to $\ket{0}$.

For the second inequality, suppose we have an $\NQ$-protocol for $f$ which uses a $\GHZ$-state of rank $r$. Then we can construct a $\NQ_0$-protocol for $f$ as follows. The players start with no shared entanglement. Player~1 constructs a $\GHZ$-state of rank $r$ locally. In the first $k-1$ communication rounds, player~$1$ distributes the $\GHZ$-state over the other $k-1$ players. After that, the players perform the $\NQ$-protocol. The resulting $\NQ_0$-protocol has complexity at most $(k-1)\NQ(f)$.
\end{proof}

To say something about the `tightness' of \cref{logrank} we consider the natural easy function in the $\NQ$-model, namely $f(x_1,\ldots,x_k) = [x_1=x_2=\cdots=x_k]$ with $x_i\in [2^n]$.

\begin{proposition}[Single bit inputs]
Let $f:[2]^k\to\{0,1\}$ be the function defined by $f(x_1,\ldots,x_k) = [x_1=x_2=\cdots=x_k]$ for $x_i \in [2]$. Then we have $\NQ(f) = 1$ and $\NQ_0(f) = (k-1)\NQ(f)$.
\end{proposition}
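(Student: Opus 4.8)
The plan has two parts, one for each equality. For $\NQ(f)=1$, I would invoke \cref{mainth}: the tensor attached to $f$ is $t=\ket{1\cdots 1}+\ket{2\cdots 2}\in(\CC^2)^{\otimes k}$, the rank-$2$ $\GHZ$ tensor, whose support is $\{(1,\dots,1),(2,\dots,2)\}$. Since $t$ is visibly a sum of two simple tensors, $\supprank(t)\le\rank(t)\le 2$. For the matching lower bound I would use that the support of a rank-$1$ tensor $u^1\otimes\cdots\otimes u^k$ is the product set $S_1\times\cdots\times S_k$, where $S_j$ is the support of $u^j$; for $k\ge 2$ any product set containing both $(1,\dots,1)$ and $(2,\dots,2)$ also contains $(1,2,1,\dots,1)$, so the support of $f$ is not a product set and $\supprank(t)\ge 2$. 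Hence $\supprank(t)=2$ and $\NQ(f)=\log_2 2=1$. (For $k=1$ the statement is degenerate; I assume $k\ge 2$ throughout.)

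For $\NQ_0(f)=(k-1)\NQ(f)=k-1$, the upper bound $\NQ_0(f)\le (k-1)\NQ(f)$ is exactly \cref{logrank}, so the work is the lower bound $\NQ_0(f)\ge k-1$. Suppose for contradiction there is an $\NQ_0$-protocol for $f$ using fewer than $k-1$ qubit transmissions. Build the (multi)graph on vertex set $[k]$ with one edge per transmission; it has fewer than $k-1$ edges, hence is disconnected, so the players partition as $[k]=A\sqcup B$ with $A,B$ nonempty and no qubit ever sent across the cut. Since there is no preshared entanglement, the initial state $\ket{x_1}\otimes\cdots\otimes\ket{x_k}$ is a product across $A\mid B$, and — fixing an accepting run, so that measurements become deterministic — every subsequent step (local operations, the transmissions, and the final projection onto $\ket{1\cdots 1}$ in $R=R_1\otimes\cdots\otimes R_k$, which is a product of single-player projections) acts entirely within the joint system of $A$ or entirely within that of $B$. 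Composing such operators on a product input gives a product output: the post-selected accepting vector factors as $\ket{w^A_{x_A}}\otimes\ket{w^B_{x_B}}$, where $x_A=(x_i)_{i\in A}$, $x_B=(x_i)_{i\in B}$, and $\ket{w^A_{x_A}}$ depends only on $x_A$ (it is the image of $\bigotimes_{i\in A}\ket{x_i}$ under the $A$-side operations), similarly for $B$.

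Consequently the protocol accepts $x$ iff some run has $\ket{w^A_{x_A}}\neq 0$ and $\ket{w^B_{x_B}}\neq 0$, i.e.\ iff $x_A\in P_A$ and $x_B\in P_B$ for the sets $P_A,P_B$ of $A$-side (resp.\ $B$-side) inputs admitting a nonzero accepting vector on their side; thus the support of $f$ equals the product set $P_A\times P_B\subseteq [2]^A\times[2]^B=[2]^k$. Since $A,B$ are nonempty, the same argument as in the first part shows $\{(1,\dots,1),(2,\dots,2)\}$ is not such a product set — contradiction. Hence $\NQ_0(f)\ge k-1$, and with the upper bound, $\NQ_0(f)=k-1=(k-1)\NQ(f)$. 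The one point needing care is the factorization across the cut: one must check that in the message-passing model with no preshared entanglement and no cross-cut transmissions, every operation (crucially, also the nondeterministic single-player measurements, whose outcome branches split as a pair $(b_A,b_B)$) is supported on one side, so the product structure of the initial state survives to the end; once that is in place, the "support must be a product set" obstruction closes the argument.
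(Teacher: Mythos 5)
Your proposal is correct and follows essentially the same route as the paper: $\NQ(f)=1$ via the $\GHZ_2^k$ tensor and \cref{mainth}, the upper bound from \cref{logrank}, and the lower bound from the observation that fewer than $k-1$ messages give a disconnected communication (multi)graph, across whose cut no protocol can produce the $\GHZ_2^k$ support. The only cosmetic difference is that the paper phrases the cut obstruction as ``a state with no entanglement across the cut cannot be SLOCC-converted to $\GHZ_2^k$,'' while you argue directly that the acceptance probability factorizes so the $1$-set would be a product set; these are the same obstruction.
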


\begin{proof}
 Note that the tensor of this function is $\GHZ^k_{2}$, so $\NQ(f) = 1$. Now consider a protocol that nondeterministically computes $f$ without preshared entanglement and $r$ rounds of communication. We may assume, without loss of generality, that the protocol consists of a first phase in which the players communicate and a second phase in which the players only do local quantum operations. After the first phase the players are sharing some state $E$ consisting of EPR-pairs shared among certain pairs of the players. We thus obtain a local linear map which maps $\sum_x \ket{x} E$ to a tensor with the same support as $\GHZ^k_2$. However, if $r < k-1$, then, viewing $E$ as a graph, $E$ is disconnected. Therefore there is a grouping of the players into two groups such that there are no EPR-pairs between the groups. Such a state cannot be converted to a $\GHZ^k_2$ state by SLOCC.
\end{proof}

Asymptotically, we can improve the relationship stated in \cref{logrank}, as follows.

\begin{theorem}[Asymptotic upper bound]\label{asympupperbound}
For any $\varepsilon > 0$, there is an $n_0$ such that for all $f: [m]^k \to \{0,1\}$, if $\NQ(f)>n_0$, then
\[
\NQ_0(f) \leq \frac{(k+\varepsilon)}{2} \NQ(f).
\]
\end{theorem}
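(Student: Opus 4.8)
The plan is to beat the naive factor $k-1$ by distributing an entangled resource around a $k$-\emph{cycle} of the players rather than a star, so that each transmitted qubit helps establish entanglement on \emph{two} of the singleton cuts $\{i\}\mid\text{rest}$, and then to invoke Strassen's laser method to argue that this cyclic resource can be converted, by local operations only, into a $\GHZ$ state of almost the full rank allowed by the flattening bound. Throughout write $s\coloneqq\NQ(f)=\log_2\supprank(t)$, where $t$ is the tensor of $f$.

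First I would set up the reduction. By \cref{mainth} together with the SLOCC reformulation following it, it suffices to exhibit a message-passing protocol, using few qubits and no preshared entanglement, that prepares a state $E$ on $H_1\otimes\cdots\otimes H_k$ from which the players can obtain, by local operations alone, a tensor with the support of $f$; and for that it is enough that $E$ restrict by local linear maps to $\GHZ^k_{2^s}$, since $\GHZ^k_{2^s}$ itself restricts to a tensor with the support of $f$ by the $\NQ$-protocol witnessing $\supprank(t)=2^s$. So the problem becomes: prepare by message passing, as cheaply as possible, a state that restricts to $\GHZ^k_{2^s}$.

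Next comes the resource. For a parameter $D$, let player $i$ hold registers $L_i,R_i$ of dimension $D$; player $i$ locally prepares a rank-$D$ maximally entangled pair and sends one half to player $i+1$ (indices mod $k$), where it becomes $L_{i+1}$. This costs $\log_2 D$ qubits per edge, so $k\log_2 D$ qubits in total, and leaves player $i$ holding the pair $(\ket{a_{i-1}},\ket{a_i})$ of the global state
\[
E_k(D)\;=\;\sum_{a_1,\dots,a_k=1}^{D}\ \big(\ket{a_k}\ket{a_1}\big)\otimes\big(\ket{a_1}\ket{a_2}\big)\otimes\cdots\otimes\big(\ket{a_{k-1}}\ket{a_k}\big),
\]
the $k$-cyclic iterated matrix multiplication tensor: for $k=3$ it is literally $\langle D,D,D\rangle$, and in general it is the tensor of the $k$-player cyclic equality problem on inputs of size $D$. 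A flattening across a singleton cut shows $E_k(D)$ has local rank exactly $D^2$ at each player, matching $\GHZ^k_{2^s}$ precisely when $D^2=2^s$; the content of the theorem is that this flattening bound is asymptotically achievable as a subrank.

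That last point is the crux and is where I expect the real work. Using $E_k(D)^{\otimes N}\cong E_k(D^N)$, I would invoke the laser method to conclude that the asymptotic subrank of $E_k(D)$ equals $D^2$; for $k=3$ this is the known fact that the asymptotic subrank of the matrix multiplication tensor is $n^2$, and for general $k$ it should follow from running the laser method on the (essentially ``tight'') cyclic tensor $E_k(D)$, which is the technology of \cref{lasersec}. Granting this, for every $\delta>0$ there is an $N$ with $E_k(D^N)\ge\GHZ^k_{D^{(2-\delta)N}}$ by restriction, so choosing $D^N$ just large enough that $D^{(2-\delta)N}\ge 2^s$ produces a resource $E_k(D')$ with $\log_2 D'\le \tfrac{s}{2-\delta}+O(1)$ that restricts to $\GHZ^k_{2^s}$. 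Combining with the reduction, $\NQ_0(f)\le k\log_2 D'\le \tfrac{k}{2-\delta}\,s+O(k)$; picking $\delta=\delta(\varepsilon,k)$ with $\tfrac{k}{2-\delta}\le\tfrac{k+\varepsilon/2}{2}$ and taking $\NQ(f)=s$ large enough to absorb the additive $O(k)$ term and to make the laser-method error attainable at scale $N\approx s/2$ gives $\NQ_0(f)\le\tfrac{k+\varepsilon}{2}\NQ(f)$, which determines $n_0$. (The matching lower bound, showing this is tight, comes from the complementary fact that at least $s$ qubits must cross each singleton cut while each qubit crosses exactly two of them; but only the upper bound is needed here.) The single obstacle is therefore the subrank/laser-method input: certifying that the $k$-cyclic iterated matrix multiplication tensor degenerates asymptotically to a $\GHZ$ state of rank $D^{2-o(1)}$.
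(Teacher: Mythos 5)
Your accounting reproduces the paper's bound, but the route is genuinely different, and the step you yourself flag as ``the single obstacle'' is exactly where the proposal is not yet a proof. The paper does not distribute entanglement along a cycle at all: the players create $\GHZ_2^{K_k}$, the \emph{complete} graph of EPR pairs, at a cost of $\binom{k}{2}$ qubits per copy, and then invoke the Vrana--Christandl result quoted as \cref{vranacor}, namely $\omega(\GHZ_2^{K_k},\GHZ_2^k)=\tfrac{1}{k-1}$; multiplying $\binom{k}{2}$ by $\tfrac{1}{k-1}+\varepsilon$ gives $\tfrac{k+\varepsilon'}{2}$ messages per $\GHZ$ level for every $k$, with no further tensor-theoretic work. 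Your route instead requires that the cyclic resource $E_k(D)\cong\IMM_D^k$ have asymptotic subrank $D^{2-o(1)}$, i.e.\ that the single-vertex flattening bound be asymptotically achievable. That statement is true, but it does not follow from \cref{lasersec}: the laser method there runs in the opposite direction (it upper bounds the rank of $\IMM$, i.e.\ produces $\IMM$ cheaply \emph{from} $\GHZ$), whereas you need to extract $\GHZ$ \emph{from} $\IMM$; the only subrank statement actually proved in the paper is \cref{subrank}, which gives merely $\GHZ_2^k\leq\IMM_2^k$. For $k=3$ your input is Strassen's theorem that the asymptotic subrank of $\langle n,n,n\rangle$ is $n^2$; for $k=5$, $D=2$ it is precisely the rate $\omega(\langle 2,2,2,2,2\rangle,\GHZ^5)=\tfrac12$ that the remark after \cref{subrank} attributes to \cite{vrana}; but the theorem you are proving concerns all $k$, and for even cycles in particular you would have to supply an argument (e.g.\ via tightness of the support of $\IMM_D^k$ and a higher-order analogue of Strassen's theorem on the asymptotic subrank of tight tensors), machinery the paper's proof deliberately avoids by using the complete graph.

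Two smaller points. First, asymptotic subrank constructions naturally produce degenerations; to stay within $\NQ_0$ rather than its approximate version you should add the standard interpolation argument converting a degeneration onto a diagonal into a restriction onto a diagonal smaller by only a polynomial factor (asymptotically free), or phrase the needed input as an SLOCC-restriction rate as \cref{vranacor} does. Second, your parenthetical tightness claim is exactly \cref{impr}, proved in the paper by the same cut-counting you sketch, so nothing is missing there. In summary: the architecture (reduce to cheaply preparing a state that restricts to $\GHZ^k_{2^s}$, then spend $\approx\tfrac{k}{2}$ qubits per level) is the same as the paper's; the difference is the entanglement-distribution graph, and with the cycle you have traded the paper's off-the-shelf ingredient for a subrank theorem about $\IMM_D^k$ that you still owe.
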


To prove \cref{asympupperbound} we use the theory of asymptotic SLOCC conversion rates.

\begin{definition} Given tensors $\psi \in V_1\otimes \cdots \otimes V_k$ and $\phi\in W_1\otimes \cdots \otimes W_k$, we say that $\psi$ can be transformed into $\phi$ via SLOCC operations, if there exist linear transformations $A_i : V_i \to K_i$ such that $\phi = (A_1 \otimes \cdots \otimes A_k)\psi$; and we write $\psi \xrightarrow{\mathrm{SLOCC}} \phi$.
Define
\begin{align*}
\omega_n(\psi,\phi) &= \frac{1}{n} \inf\{m \in \NN_{\geq1} \mid \psi^{\otimes m} \xrightarrow{\mathrm{SLOCC}} \phi^{\otimes n}\}\\
\shortintertext{and}
\omega(\psi,\phi) &= \lim_{n\to\infty}\omega_n(\psi,\phi).
\end{align*}
\end{definition}

\begin{lemma}
The limit $\omega(\psi,\phi)$ exists and for all $n$ the inequality $\omega_n(\psi,\phi) \geq \omega(\psi,\phi)$ holds; in other words, $\omega_n = \omega + o(1)$.
\end{lemma}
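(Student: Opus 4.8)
The plan is to establish the limit via a superadditivity (or rather, subadditivity) argument on an appropriate sequence, invoking Fekete's lemma, and then separately deal with the factor $1/n$ normalization. First I would introduce the unnormalized quantity
\[
g(n) \coloneqq \inf\{m \in \NN_{\geq 1} \mid \psi^{\otimes m} \xrightarrow{\mathrm{SLOCC}} \phi^{\otimes n}\},
\]
so that $\omega_n(\psi,\phi) = g(n)/n$. The key structural fact is that $g$ is subadditive: if $\psi^{\otimes a} \xrightarrow{\mathrm{SLOCC}} \phi^{\otimes m}$ and $\psi^{\otimes b} \xrightarrow{\mathrm{SLOCC}} \phi^{\otimes n}$, then tensoring the two SLOCC maps gives $\psi^{\otimes a+b} \xrightarrow{\mathrm{SLOCC}} \phi^{\otimes m+n}$, because SLOCC conversions compose under tensor product (the local maps $A_i \otimes A_i'$ witness the combined conversion). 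Hence $g(m+n) \leq g(m) + g(n)$. By Fekete's subadditivity lemma, $\lim_{n\to\infty} g(n)/n$ exists and equals $\inf_n g(n)/n$. This simultaneously gives existence of $\omega(\psi,\phi)$ and the inequality $\omega_n(\psi,\phi) \geq \omega(\psi,\phi)$ for all $n$, which is exactly the claimed statement $\omega_n = \omega + o(1)$.

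Next I would check the hypotheses of Fekete's lemma are genuinely met, i.e. that $g(n)$ is finite for all $n$ (so the infimum is over a nonempty set and the sequence is real-valued). Finiteness of $g(1)$ — equivalently, that some tensor power of $\psi$ maps to $\phi$ by SLOCC — is the one place where an assumption is needed: if $\psi$ is, say, a product tensor and $\phi$ is not, no SLOCC conversion exists at any rate. In the intended application $\psi = \GHZ^k_{2^n}$-type tensors and $\phi$ is a tensor with prescribed support, and the relevant conversions exist because one can always restrict to a large enough GHZ state; I would state the lemma under the standing assumption that $\omega_n(\psi,\phi)$ is finite for some (hence, by subadditivity padding with extra copies, all sufficiently large) $n$, or simply note that the degenerate case $\omega \equiv +\infty$ is consistent with the statement. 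A small bookkeeping point: to go from "finite for large $n$" to "finite for all $n \geq 1$" one uses that $\phi^{\otimes n}$ is a restriction of $\phi^{\otimes n+1}$ composed with a projection, so $g(n) \leq g(n+1)$ fails in general — instead one should run Fekete on the subsequence where $g$ is finite, which suffices since we only need the limit.

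The main obstacle, such as it is, is purely the subadditivity verification: one must be careful that an SLOCC conversion $\psi^{\otimes m}\xrightarrow{\mathrm{SLOCC}}\phi^{\otimes n}$ is witnessed by local maps on each of the $k$ parties \emph{respecting the tensor factorization into $m$ resp.\ $n$ blocks}, so that tensoring two such conversions yields local maps on the $(a+b)$-fold and $(m+n)$-fold systems that are again honestly local per party. This is immediate from the definition of $\xrightarrow{\mathrm{SLOCC}}$ given just above — the map $A_i$ for party $i$ acts on $V_i^{\otimes m}$, and the combined map is $A_i \otimes A_i' : V_i^{\otimes a} \otimes V_i^{\otimes b} \to W_i^{\otimes m}\otimes W_i^{\otimes n}$ — so no real work is required, but it is the step that makes the whole argument go through. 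Everything else is Fekete.

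\begin{proof}
Write $g(n) = \inf\{m\in\NN_{\geq1} \mid \psi^{\otimes m} \xrightarrow{\mathrm{SLOCC}} \phi^{\otimes n}\}$, so $\omega_n(\psi,\phi) = g(n)/n$; we may assume $g(n) < \infty$ for all large $n$, as otherwise $\omega_n \equiv +\infty$ and the statement is trivial. If $\psi^{\otimes a}\xrightarrow{\mathrm{SLOCC}}\phi^{\otimes m}$ via local maps $A_i$ and $\psi^{\otimes b}\xrightarrow{\mathrm{SLOCC}}\phi^{\otimes n}$ via local maps $A_i'$, then the local maps $A_i\otimes A_i'$ witness $\psi^{\otimes a+b}\xrightarrow{\mathrm{SLOCC}}\phi^{\otimes m+n}$. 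Hence $g(m+n)\leq g(m)+g(n)$, i.e.\ $g$ is subadditive. By Fekete's lemma, $\lim_{n\to\infty} g(n)/n$ exists and equals $\inf_n g(n)/n$. Therefore $\omega(\psi,\phi) = \lim_n \omega_n(\psi,\phi)$ exists and $\omega_n(\psi,\phi)\geq \omega(\psi,\phi)$ for every $n$, which is the assertion $\omega_n = \omega + o(1)$.
\end{proof}
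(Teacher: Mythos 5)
Your Fekete-style argument is correct, and in fact the paper offers no proof of this lemma at all (it is stated as a standard fact), so there is nothing to deviate from: passing to the unnormalized quantity $g(n)=n\,\omega_n(\psi,\phi)$, observing subadditivity $g(m+n)\leq g(m)+g(n)$ by tensoring the two witnessing tuples of local maps, and invoking Fekete's lemma to get $\lim_n g(n)/n=\inf_n g(n)/n$ is exactly the intended argument, and it yields both existence of $\omega$ and $\omega_n\geq\omega$ simultaneously. One small correction to your bookkeeping remark: the monotonicity $g(n)\leq g(n+1)$ does \emph{not} fail in general for nonzero $\phi$ --- there is always a product functional $\bra{\ell_1}\otimes\cdots\otimes\bra{\ell_k}$ with $\braket{\ell_1\otimes\cdots\otimes\ell_k|\phi}\neq 0$ (product functionals span the dual; this is also the content of the paper's cleanup lemma), so each party can locally contract away one copy of $\phi$ and $\phi^{\otimes n+1}\xrightarrow{\mathrm{SLOCC}}\phi^{\otimes n}$. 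Consequently either $g$ is finite everywhere (as soon as $g(n_0)<\infty$ for a single $n_0$, monotonicity gives $g(1)<\infty$ and subadditivity gives $g(n)\leq n\,g(1)$ for all $n$) or $g\equiv+\infty$, which cleans up your "run Fekete on a subsequence" caveat; in the paper's application ($\psi=\GHZ_2^{K_k}$, $\phi=\GHZ_2^k$) finiteness is immediate anyway.
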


\begin{theorem}[Vrana-Christandl \cite{vrana}]\label{vranacor} Let $\GHZ_2^{K_k}$ be the $k$-party tensor consisting of EPR-pairs between any parties. Then
\[
\omega(\GHZ_2^{K_k}, \GHZ_2^k) = \frac{1}{k-1}.
\]
In other words, for any $\varepsilon > 0$, there is an $n_0$ such that for all $n>n_0$,
\[
(\GHZ_2^{K_k})^{\otimes n(\frac{1}{k-1}+\varepsilon )}  \xrightarrow{\mathrm{SLOCC}} (\GHZ_2^{k})^{\otimes n}.
\]
\end{theorem}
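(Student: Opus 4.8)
\textbf{Proof proposal for \cref{vranacor}.}

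The plan is to prove the two inequalities $\omega(\GHZ_2^{K_k},\GHZ_2^k)\leq \tfrac{1}{k-1}$ and $\omega(\GHZ_2^{K_k},\GHZ_2^k)\geq \tfrac{1}{k-1}$ separately, using asymptotic SLOCC rates together with a multiplicative monotone that distinguishes the two tensors. For the lower bound on the rate, I would use a \emph{flattening-type} obstruction: group the $k$ parties into two blocks, say $\{1\}$ versus $\{2,\ldots,k\}$, collapsing each tensor to a bipartite state and reading off its matrix rank. Here $\GHZ_2^k$ has bipartite rank $2$ across every such cut, while $\GHZ_2^{K_k}$, which contains an EPR-pair on every edge, has bipartite rank $2^{k-1}$ across the cut separating one vertex from the rest (one pair for each of the $k-1$ edges incident to that vertex). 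Since bipartite rank is multiplicative under tensor powers and nonincreasing under SLOCC, $(\GHZ_2^{K_k})^{\otimes m}\xrightarrow{\mathrm{SLOCC}}(\GHZ_2^k)^{\otimes n}$ forces $(2^{k-1})^m\geq 2^n$, i.e. $m/n\geq 1/(k-1)$; taking infima and the limit gives $\omega(\GHZ_2^{K_k},\GHZ_2^k)\geq 1/(k-1)$.

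For the matching upper bound --- the substantive direction --- I would invoke the Vrana--Christandl result on asymptotic entanglement transformation of GHZ-type tensors; this is exactly the content of the cited reference \cite{vrana}, and I would reproduce its statement in the form we need. Concretely, one shows that asymptotically $n(k-1)$ EPR-pairs suitably distributed on the complete graph $K_k$ can be converted by SLOCC into roughly $n$ copies of the $k$-party GHZ state (in the tensor-power sense), so that $\omega_n\to 1/(k-1)$ from above. The mechanism is a combination of: (i) an exact identity transforming a constant number of EPR-pairs along a spanning structure into a single low-rank GHZ state with some loss, and (ii) amortizing this loss over many copies using the subadditivity/Fekete argument of the preceding lemma, which guarantees $\omega=\lim\omega_n$ exists and equals the infimum. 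One can also phrase the upper bound via Strassen's theory of tensor degeneration and the asymptotic spectrum, but the cleanest route is simply to cite \cite{vrana} for the $\leq$ direction, since the full optimal protocol there is nontrivial and not something to reconstruct here.

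Combining the two bounds yields $\omega(\GHZ_2^{K_k},\GHZ_2^k)=1/(k-1)$. The restatement ``for any $\varepsilon>0$ there is $n_0$ such that $(\GHZ_2^{K_k})^{\otimes n(1/(k-1)+\varepsilon)}\xrightarrow{\mathrm{SLOCC}}(\GHZ_2^k)^{\otimes n}$'' is then immediate: by definition of $\omega$ and the lemma that $\omega_n=\omega+o(1)$, for all sufficiently large $n$ there is an integer $m\leq n(1/(k-1)+\varepsilon)$ with $(\GHZ_2^{K_k})^{\otimes m}\xrightarrow{\mathrm{SLOCC}}(\GHZ_2^k)^{\otimes n}$, and padding with extra EPR-pairs (which can only help, since a tensor factor of $\GHZ_2^{K_k}$ degenerates to anything of smaller format by zeroing coordinates) lets us take the exponent to be exactly $\lceil n(1/(k-1)+\varepsilon)\rceil$. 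The main obstacle is the upper-bound direction: a genuinely efficient SLOCC protocol converting EPR-pairs on $K_k$ into GHZ states at rate $1/(k-1)$ is delicate and is precisely what \cite{vrana} supplies; the lower bound, by contrast, is the short flattening argument above.
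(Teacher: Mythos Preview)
The paper does not give a proof of this theorem at all: it is stated with attribution to Vrana--Christandl \cite{vrana} and used as a black box in the proof of \cref{asympupperbound}. Your proposal is therefore not in conflict with the paper --- there is nothing to compare against --- and in fact goes further than the paper does. Your flattening argument for the lower bound $\omega(\GHZ_2^{K_k},\GHZ_2^k)\geq 1/(k-1)$ is correct and self-contained (indeed it is essentially the same monotone the paper itself uses later in \cref{impr}), and your decision to cite \cite{vrana} for the upper bound is exactly what the paper does for the entire statement. The only thing to flag is presentational: since the paper treats this as a quoted result, a proof environment here would be out of place; if you want to include the easy direction, a remark or a one-line parenthetical would fit the paper's style better.
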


\begin{proof}[\bf Proof of \cref{asympupperbound}]
Creating $\GHZ_2^{K_k}$ in the $\NQ_0$-model costs $\binom{k}{2}$ messages. Asymptotically, we can transform $1/(k-1)$ copies of $\GHZ_2^{K_k}$ to one copy of $\GHZ_2^k$ by SLOCC. More precisely, by \cref{vranacor}, for any $\varepsilon > 0$, there is an $n_0$ such that for all $n>n_0$,
\[
(\GHZ_2^{K_k})^{\otimes \frac{n}{k-1}+\varepsilon n}  \xrightarrow{\mathrm{SLOCC}} (\GHZ_2^{k})^{\otimes n}.
\]
We conclude that, for any $\varepsilon > 0$, there is an $n_0$ such that for all $n>n_0$, $\binom{k}{2}(\frac{n}{k-1}+\varepsilon n) = ((k+\varepsilon')n)/2$ messages are sufficient to generate $(\GHZ_2^{k})^{\otimes n}$ by SLOCC. 

To prove the theorem, suppose we have an $\NQ$-protocol for $f$ which uses a $\GHZ$ state of rank $2^n$ and no communication. Consider the following $\NQ_0$-protocol for $f$. Create a $\GHZ$-state of rank $2^n$ by sending $\frac{(k+\varepsilon')n}{2}$ messages and then continue with the $\NQ$-protocol.
\end{proof}

The following proposition says that the asymptotic relationship of \cref{asympupperbound} is tight.

\begin{proposition}[$n$-bit inputs]\label{impr}
Let $f:[2^n]^k \to \{0,1\}$ be the function defined by $f(x_1,\ldots,x_k) = [x_1=x_2=\cdots=x_k]$ for $x_i \in [2^n]$. Then we have $\NQ(f) = n$ and $\NQ_0(f) \geq \tfrac{k}{2} \NQ(f)$.
\end{proposition}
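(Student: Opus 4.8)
The plan is to treat the two assertions separately. For $\NQ(f)=n$ I would observe that the tensor of $f$ is exactly $\GHZ^k_{2^n}=\sum_{a=1}^{2^n}\ket a^{\otimes k}$, since $f(x)=1$ precisely when $x_1=\cdots=x_k$. Any tensor with the same support is of the form $s=\sum_{a=1}^{2^n}c_a\ket a^{\otimes k}$ with all $c_a\neq 0$; flattening such an $s$ across one party versus the remaining $k-1$ yields a matrix with exactly one nonzero entry in each of its $2^n$ rows, and these entries lie in distinct columns, so the flattening has rank $2^n$ and hence $\rank(s)\geq 2^n$. Since $\rank(\GHZ^k_{2^n})\leq 2^n$ trivially, $\supprank(\GHZ^k_{2^n})=2^n$, and \cref{mainth} gives $\NQ(f)=\log_2 2^n=n$.

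For the lower bound $\NQ_0(f)\geq \tfrac{k}{2}n$, I would fix an optimal $\NQ_0$-protocol with $c\coloneqq \NQ_0(f)$ qubits sent in total (assuming, as is standard, that the identity of sender and receiver in each round is fixed in advance), and let $H$ be the multigraph on vertex set $[k]$ with one edge $\{i,j\}$ for each qubit sent between players $i$ and $j$; thus $H$ has $c$ edges. After deferring all intermediate measurements to the end and taking each local operation to be a linear map (as in the remark following \cref{mainth}), the protocol becomes a fixed sequence of local linear maps interleaved with qubit relocations, followed by the projection onto $\ket{1\cdots1}$ in the output registers. Applying this fixed sequence to the superposed input state $\sum_{x\in[2^n]^k}\ket{x_1\cdots x_k}$ (together with the players' $\ket0$ ancillas) rather than to a single input, then projecting the output registers and applying a rank-$1$ cleanup functional exactly as in the proof of \cref{mainth}, produces a tensor $s$ in the input registers whose support equals that of $f$, hence of $\GHZ^k_{2^n}$; so $s=\sum_{a=1}^{2^n}c_a\ket a^{\otimes k}$ with all $c_a\neq 0$.

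I would then track, for each player $i$, the rank $\chi_i$ of the flattening of the current global state across the bipartition that separates the qubits currently held by player $i$ from all other qubits. The initial state is a product state over the $k$ players, so $\chi_i=1$; a local operation never increases $\chi_i$; a relocation not touching player $i$ leaves $\chi_i$ unchanged; and a relocation that moves a qubit into or out of player $i$'s hands changes $\chi_i$ by at most a factor of $2$ (the one small lemma to check, immediate from writing the state as $\ket0_Q\ket{\psi_0}+\ket1_Q\ket{\psi_1}$ for the relocated qubit $Q$). Exactly $\deg_H(i)$ relocations involve player $i$, so at the end $\chi_i\leq 2^{\deg_H(i)}$, and since the final output-register projection and the cleanup functional are local this persists: $\chi_i(s)\leq 2^{\deg_H(i)}$. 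But the flattening of $s$ across player $i$ versus the rest has rank $2^n$, so $\deg_H(i)\geq n$ for every $i$; summing over the $k$ players gives $2c=\sum_{i=1}^k\deg_H(i)\geq kn$, whence $\NQ_0(f)=c\geq \tfrac k2 n=\tfrac k2\NQ(f)$. I expect the main obstacle to be the bookkeeping in the second paragraph: making it precise that the protocol can be put in the standard ``local linear maps $+$ relocations $+$ final projection'' form so that the linearity trick of \cref{mainth} applies (deferring measurements, treating local operations as SLOCC maps, and treating the classical input registers as ordinary quantum registers); once that setup is in place, the rank-monotonicity and double-counting steps are routine.
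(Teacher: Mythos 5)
Your proof is correct, and it rests on the same invariant as the paper's, but your implementation of the lower bound is genuinely different. The paper first normalizes the $\NQ_0$-protocol into a communication phase followed by a purely local phase: each sent qubit is replaced (nondeterministically, via the teleportation trick from \cref{logrank}) by an EPR pair, so the whole protocol becomes one SLOCC transformation from the static graph state $E=\GHZ_2^G$ ($G$ the communication multigraph with $r$ edges) to a tensor supported on $\GHZ^k_{2^n}$; it then uses that $\min_S\log_2\rk_S$ is an SLOCC monotone, that this min-cut of $E$ is at most the minimum degree of $G$, and the handshake identity $\sum_i\deg(i)=2r$. You skip the teleportation/graph-state reduction entirely and instead track, through the run of the protocol on the superposed input, the flattening rank across each player's currently held qubits, using the elementary fact that moving a single qubit across a cut changes that rank by at most a factor of $2$; this gives $\chi_i\le 2^{\deg_H(i)}$ directly, hence $\deg_H(i)\ge n$ for every player, and the same double counting finishes. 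Your dynamic argument is more self-contained (no appeal to nondeterministic teleportation or to monotonicity of a static resource state), at the price of the bookkeeping you flag: you must fix the standard normal form in which the classical input registers act only as controls and never move, so that at the end the $i$-th leg of the cleaned-up tensor $s$ sits with player $i$ and the bipartition ``player $i$'s qubits versus the rest'' really separates that leg from the others --- the same idealization the paper makes implicitly in the proof of \cref{mainth}, so this is a presentational point rather than a gap. Your first half ($\NQ(f)=n$ via the single-party flattening of any tensor supported on $\GHZ^k_{2^n}$) is exactly the paper's (largely implicit) argument.
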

\begin{proof}
As in the previous proof, note that the tensor corresponding to $f$ is $\GHZ^k_{2^n}$. Suppose there is an $\NQ_0$ protocol using $r$ messages. View the communication pattern of this protocol as an undirected multigraph $G$ (i.e.~parallel edges are allowed) on $k$ vertices. Note that $G$ has $r$ edges. Let $E=\GHZ_2^G$ be the tensor that has an EPR pair at every edge in $G$. The protocol yields an SLOCC transformation of $E$ to $\GHZ^k_{2^n}$. Let~$\ell$ be the minimal number of edges across any cut of $G$. Then $\ell$ is at most the minimal degree $d$ of~$G$. The sum of all degrees in $G$ equals $2r$, so $k \ell \leq k d \leq 2r$, which implies the inequality $r \geq k \ell/2$. The number $\ell$ is equal to $\min_{S\subseteq[k]}\log_2 \rk_S(E)$, where $\rk_S(E)$ denotes the rank of $E$ after flattening according to the set $S$. This value cannot increase under any SLOCC transformation. Now note that $\log_2 \rk_{\{i\}}(\GHZ^k_{2^n}) = n$ for any $i\in [k]$, so $\ell \geq n$. We conclude that $r \geq kn/2$.
\end{proof}

\begin{remark}
Another way to prove \cref{impr} is to first symmetrize the protocol to obtain an SLOCC transformation of a state $E$ with  $\log_2 \rk_{\{i\}}(E) = (k-1)!2r$ to the state $\GHZ^k_{2^{k!n}}$. We have $\log_2 \rk_{\{i\}}(\GHZ^k_{2^{k!n}}) = k!\,n$. Since $\log_2 \rk_{\{i\}}$ is an SLOCC-monotone, we obtain the inequality $(k-1)!\,2r \geq k!\,n$ and hence $r\geq kn/2$.
\end{remark}

\section{Cyclic equality problem}\label{cep}

The two-player equality problem $\EQ_n$ is the problem of Alice and Bob having to decide whether their $n$-bit inputs are equal. Since the identity matrix has full support rank, we have $\NQ(\EQ_n) = n$. We generalize $\EQ_n$ to multiple players as follows. Let $G$ be an undirected graph. Let $\EQ_n^G$ be the problem of $|G|$ players having to solve the $n$-bit equality problem between players connected by edges. (Note that this definition naturally generalizes to hypergraphs.) If $G$ is a bipartite graph, one easily sees that by grouping the players we can transform the problem into an equality problem on $en$ bits $\EQ_{en}$, where $e$ is the number of edges in the graph. Therefore $\NQ(\EQ_n^G) = en$, that is, the trivial protocol is optimal for bipartite graphs. On the other hand, if $G$ contains an odd cycle, then this argument fails. In the rest of this paper we will focus on the extreme case of $G$ being an odd cycle and investigate the complexity of the corresponding equality problem.

\begin{definition}
The $k$-player \emph{cyclic equality problem} on $n$ bits $\EQ_n^{C_k}$ is the function
\[
\EQ_n^{C_k}: ([2^n]\times[2^n])^k \to \{0,1\}:\, (a_1 b_1, \ldots, a_k b_k) \mapsto 
\begin{cases} 
1 & \textnormal{if $b_1 = a_2,\, b_2 = a_3,\, \ldots,\, b_k = a_1$}\\
0 & \textnormal{otherwise},
\end{cases}
\]
that is, the players are arranged in a circle; player $i$ receives two $n$-bit inputs $a_i, b_i$ and has to decide whether $a_i=b_{i-1}$ and $b_i = a_{i+1}$, where the indices are taken modulo $k$.
\end{definition}

It turns out that the tensor corresponding to this function is a generalisation of the \emph{matrix multiplication tensor}, one of the central objects of study in algebraic complexity theory. This tensor arises as follows in algebraic complexity theory. Consider the bilinear map
\[
\CC^{m\times m} \times \CC^{m\times m} \to \CC^{m\times m} : (A, B) \mapsto AB
\]
which multiplies two complex $m\times m$ matrices. Any bilinear map $U\times V\to W$ corresponds canonically to a tensor in $U\otimes V\otimes W$. The number of multiplications in the field $\CC$ necessary to perform the bilinear map is equal to the tensor rank of the corresponding tensor, up to a factor 2. The tensor corresponding to the matrix multiplication map is
\[
\langle m,m,m \rangle \coloneqq \sum_{x\in [m]^3} \ket{x_1 x_2}\!\ket{x_2 x_3}\!\ket{x_3 x_1}.
\]
A natural generalisation of the tensor $\langle m,m,m \rangle$ to a $k$-party tensor is the so-called \emph{iterated matrix multiplication} tensor
\[
\IMM_m^k \coloneqq \sum_{x\in [m]^k} \ket{x_1 x_2}\!\ket{x_2 x_3}\cdots\ket{x_k x_1}.
\]
Clearly, $\IMM_m^3 = \langle m,m,m \rangle$. The tensor $\IMM_m^k$ corresponds to the multilinear map
\[
(\CC^{m\times m})^{\times k} \to \CC : (A_1,A_2,\ldots, A_k) \mapsto \tr (A_1 A_2\cdots A_k)
\]
which computes the trace of the product of $k$ matrices. We note that, when viewed as a polynomial in the matrix entries, $\IMM_m^k$ plays a special role in the field of arithmetic circuits and geometric complexity theory. Namely, $\IMM^k_3$ is complete for the class $\VPe$ of families of polynomials computable by small formulas \cite{ben1992computing}, and $\IMM^k_k$ is complete for the class $\VQP$, for which the determinant is also complete \cite{blaser2001complete}. The following connection between iterated matrix multiplication and cyclic equality is readily observed.

\begin{proposition}\label{cycmamu}
The tensor corresponding to the cyclic equality function $\EQ_n^{C_k}$ on $n$ bits is the iterated matrix multiplication tensor $\IMM_{2^n}^k$ with $2^n \times 2^n$ matrices. Therefore, we have the equalities $\NQ(\EQ_n^{C_k}) = \log_2 \supprank(\IMM_{2^n}^k)$ and $\borderNQ(\EQ_n^{C_k}) = \log_2 \bsupprank(\IMM_{2^n}^k)$
\end{proposition}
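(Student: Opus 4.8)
The plan is to verify directly that the tensor associated to $\EQ_n^{C_k}$, under the canonical identification of the function's domain with a basis, is precisely $\IMM_{2^n}^k$, and then to invoke \cref{mainth} to read off the two complexity identities. First I would recall from \cref{mainth} that the tensor of a Boolean function $f$ on $[2^n]^k$ is $\sum_i f(i)\ket{i_1}\cdots\ket{i_k}$ with respect to the standard basis. For $\EQ_n^{C_k}$ the $i$-th player's input is a \emph{pair} $(a_i,b_i)\in[2^n]\times[2^n]$, so the $i$-th tensor factor is $\CC^{2^n}\otimes\CC^{2^n}$ with basis $\{\ket{a_i b_i}\}$; hence the ambient space is $\bigotimes_{i=1}^k(\CC^{2^n}\otimes\CC^{2^n})$, which is exactly the space in which $\IMM_{2^n}^k$ lives once we set $m=2^n$.

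Next I would write out the support explicitly. By definition $\EQ_n^{C_k}(a_1b_1,\ldots,a_kb_k)=1$ exactly when $b_i=a_{i+1}$ for all $i$ (indices mod $k$). So the tensor of the function is $\sum \ket{a_1 b_1}\cdots\ket{a_k b_k}$ summed over all tuples satisfying these constraints. Reparametrising the summation by the ``cyclic'' variables $x_i := a_i$ (equivalently $b_{i-1}$), so that $a_i=x_i$ and $b_i=x_{i+1}$, the constraints are automatically satisfied and the sum becomes $\sum_{x\in[2^n]^k}\ket{x_1 x_2}\ket{x_2 x_3}\cdots\ket{x_k x_1}$, which is literally the definition of $\IMM_{2^n}^k$. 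This is a routine bijection check: the map $(x_1,\ldots,x_k)\mapsto(x_1x_2, x_2x_3,\ldots,x_kx_1)$ is a bijection from $[2^n]^k$ onto the $1$-set of $\EQ_n^{C_k}$, and distinct monomials of the function correspond to distinct basis tensors, so no collisions or cancellations occur and the coefficients are all $1$.

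Finally I would apply \cref{mainth} with $f=\EQ_n^{C_k}$ and $t=\IMM_{2^n}^k$ to obtain $\NQ(\EQ_n^{C_k})=\log_2\supprank(\IMM_{2^n}^k)$ and $\borderNQ(\EQ_n^{C_k})=\log_2\bsupprank(\IMM_{2^n}^k)$. One subtlety worth a sentence in the write-up: \cref{mainth} as stated takes inputs from $[2^n]$ for each player, whereas here each player's input set has size $2^{2n}$; this is harmless by the \textbf{Remark} following the definition of $\N(f)$, which notes the results extend to players with input sets of differing (arbitrary) sizes, and in any case one may absorb the pair $(a_i,b_i)$ into a single symbol from an alphabet of size $2^{2n}$ and note that the only role of the alphabet in \cref{mainth} is to index the standard basis of the tensor factor. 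The main ``obstacle'' is really just bookkeeping: making sure the tensor-factor structure $\CC^{2^n}\otimes\CC^{2^n}$ per player matches the bipartite index structure $\ket{x_ix_{i+1}}$ of $\IMM$, and that the reparametrisation of the sum is a genuine bijection rather than an overcount — there is no hard analytic or algebraic content here, so the proof will be short.
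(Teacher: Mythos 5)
Your proposal is correct and matches the paper's (essentially omitted) argument: the paper treats the identification of the $\EQ_n^{C_k}$ tensor with $\IMM_{2^n}^k$ as readily observed via exactly this reparametrisation $a_i=x_i$, $b_i=x_{i+1}$, and the complexity equalities then follow directly from \cref{mainth}. Your handling of the input-set-size bookkeeping (identifying $[2^n]\times[2^n]$ with $[2^{2n}]$ so that the standard bases and the per-player tensor factors match) is the right way to make the application of \cref{mainth} explicit.
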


The remainder of this paper is organized as follows. In the following four paragraphs we do the following: (1) we show that  in the classical model, the naïve protocol in which every player broadcasts his input is optimal; (2) we show that when $k$ is even the naïve protocol is optimal quantumly; (3) we exhibit nontrivial protocols when $n=1$ and $k=3$ or $k=5$; (4) we show nontrivial lower bounds on the quantum complexity by use of Young flattenings. Finally, in the last section, we show that the Strassen laser method yields nontrivial protocols for all odd $k\geq 3$, asymptotically.

\paragraph{Classical lower bound with the fooling set method.} 
We will show that in the classical situation the trivial protocol is always optimal. To prove a lower bound on the classical complexity of the cyclic equality problem we use the fooling set method.

\begin{theorem} The classical nondeterministic communication complexity $\N(\EQ_n^{C_k})$ of the cyclic equality problem equals $kn$.
\end{theorem}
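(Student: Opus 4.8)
The plan is to prove $\N(\EQ_n^{C_k}) \le kn$ and $\N(\EQ_n^{C_k}) \ge kn$ separately. For the upper bound I would use the naïve protocol: each player $i$ broadcasts their input $b_i$, which takes $n$ bits, so $kn$ bits in total; afterwards every player $i$ knows $b_{i-1}$ and can locally check whether $a_i = b_{i-1}$. Every player outputs $1$ iff their check succeeds, so the global output is $1$ exactly when all $k$ equalities $b_{i-1} = a_i$ hold, i.e.\ exactly on the $1$-inputs. This is a valid (even deterministic) nondeterministic protocol of complexity $kn$.

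For the lower bound I would use the fooling set method in its standard form for number-in-hand broadcast nondeterminism. The key structural fact is that a length-$c$ communication transcript $\tau$, together with the conjunctive acceptance condition that every player outputs~$1$, determines a combinatorial box $S_1(\tau) \times \cdots \times S_k(\tau) \subseteq ([2^n]\times[2^n])^k$, namely the set of inputs for which that transcript can occur and be accepted for some choice of private randomness; here $S_i(\tau)$ depends only on player~$i$'s input, since each broadcast value and each player's output is a function of that player's input, their private randomness, and the transcript seen so far. Correctness forces each such box to contain only $1$-inputs, and every $1$-input lies in at least one of them. Since there are at most $2^c$ transcripts, a \emph{fooling set} --- a family $F$ of $1$-inputs no two of which lie in a common all-$1$ box --- yields $2^c \ge |F|$, hence $c \ge \log_2|F|$.

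I claim that the set $F$ of \emph{all} $1$-inputs works, and $|F| = 2^{kn}$. A $1$-input is precisely a tuple with $a_i = c_{i-1}$ and $b_i = c_i$ for some $(c_1,\dots,c_k) \in ([2^n])^k$ (indices modulo $k$), which gives $|F| = 2^{kn}$. Given two distinct $1$-inputs $u \leftrightarrow (c_i)_i$ and $v \leftrightarrow (c_i')_i$, choose $j$ with $c_j \ne c_j'$, and let $w$ be the ``mixed'' input that takes coordinate $j+1$ from $u$ and all other coordinates from $v$. Then $w$ lies in every box containing both $u$ and $v$. But the constraint of $\EQ_n^{C_k}$ at position $j$ reads $b_j = a_{j+1}$, and in $w$ we have $b_j = c_j'$ (coordinate $j$ comes from $v$, as $j \ne j+1$) while $a_{j+1} = c_j$ (coordinate $j+1$ comes from $u$); since $c_j \ne c_j'$ this constraint fails, so $w$ is a $0$-input. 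Hence $u$ and $v$ cannot share an all-$1$ box, $F$ is a fooling set, and $c \ge \log_2 2^{kn} = kn$. Together with the upper bound this gives $\N(\EQ_n^{C_k}) = kn$.

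The argument is essentially a textbook fooling-set computation, so I do not expect a real obstacle; the two places that need care are (i) spelling out cleanly, in the broadcast number-in-hand model with private randomness and conjunctive acceptance, that an accepted transcript carves out a genuine product set and that there are at most $2^c$ transcripts, and (ii) the cyclic bookkeeping in the swap step --- in particular checking that the choice ``break the constraint at position $j$, take coordinate $j+1$ from $u$'' still does what we want when $j+1$ wraps around to $1$.
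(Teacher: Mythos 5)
Your proposal is correct and follows essentially the same route as the paper: the trivial $kn$-bit broadcast protocol for the upper bound, and a fooling-set argument over all $2^{kn}$ one-inputs for the lower bound, where mixing coordinates of two distinct $1$-inputs (your coordinate $j+1$ from $u$, rest from $v$) produces a $0$-input that an accepting transcript would wrongly accept. The only cosmetic difference is packaging: you phrase the lower bound via the product-box ("rectangle") property of accepted transcripts, while the paper argues directly that distinct $1$-inputs must yield distinct accepting transcripts by mixing inputs and private randomness; both hinge on the same factorization of behaviour across players given a transcript.
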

\begin{proof}
Let $S\subseteq[2^{2n}]^k$ be the set of 1-inputs of the function $\EQ_n^{C_k}$. This set has size~$2^{kn}$. Let $\Pi$ be a classical protocol for $\EQ_n^{C_k}$ and denote by $\Pi_r(x_1,\ldots,x_k)$ the sequence of messages sent by the players in the protocol $\Pi$ on input $x\in [2^{2n}]^k$ and private randomness~$r\in[m]^k$. Suppose there are distinct 1-inputs $x,y \in S$ and private randomnesses $r,s \in [m]^k$ such that $\Pi_r(x_1,\ldots,x_k) = \Pi_s(y_1,\ldots,y_k)$. There is an $i$ such that $x_i \neq y_i$, say $i=1$. We have $\Pi_r(x_1,\ldots,x_k) = \Pi_{(r_1,s_2,\ldots,s_k)}(x_1,y_2,\ldots,y_k)$, so the protocol outputs 1 on input $x_1,y_2,\ldots,y_k$ with randomness $(r_1,s_2,\ldots,s_k)$. However, $x_1,y_2,\ldots,y_k$ is a 0-input, a contradiction. Therefore, $\Pi_r(x_1,\ldots,x_k) \neq \Pi_s(y_1,\ldots,y_k)$. We conclude that $\N(\EQ_n^{C_k}) \geq \log_2(|S|)$.
\end{proof}

\paragraph{An even number of quantum players.} When $k$ is even, the cycle graph $C_k$ is bipartite, and, as mentioned above, the best protocol for an equality problem on a bipartite graph is the trivial protocol. We record this statement in terms of border support rank in the following proposition.

\begin{proposition}\label{evencase} For even $k$,\, $m^k \leq \bsupprank(\IMM_{m}^k)$. As a consequence, we have the equalities $\borderNQ(\EQ_n^{C_k}) = \NQ(\EQ_n^{C_k}) = kn$.
\end{proposition}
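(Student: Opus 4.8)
The plan is to establish the lower bound $m^k \leq \bsupprank(\IMM_m^k)$ for even $k$ by exhibiting a flattening (a grouping of the $k$ tensor factors into two halves) under which the flattened matrix has rank exactly $m^k$, and which moreover has the property that \emph{any} tensor with the same support as $\IMM_m^k$ must have a flattening of rank at least $m^k$. Since matrix rank is an SLOCC-monotone and does not increase under taking limits (rank is lower semicontinuous, so border rank is bounded below by flattening rank), this lower bound passes to the border support rank. The consequence for $\borderNQ$ and $\NQ$ then follows from \cref{cycmamu} and the trivial upper bound $\NQ(\EQ_n^{C_k}) \leq kn$ (every player broadcasts both of his $n$-bit inputs), together with $\borderNQ \leq \NQ$.

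The key point is the choice of flattening. Since $k$ is even, partition $[k]$ into the two sets $S = \{1,3,5,\ldots,k-1\}$ and $\bar S = \{2,4,\ldots,k\}$ of odd- and even-indexed players. Recall that the $j$-th tensor factor of $\IMM_m^k$ carries the pair of indices $(x_j, x_{j+1})$ (indices mod $k$). Grouping the odd-indexed factors together, the "row" multi-index of the flattening is $(x_1,x_2; x_3,x_4; \ldots; x_{k-1},x_k)$, which ranges over all of $[m]^k$, and likewise the "column" multi-index from the even-indexed factors is $(x_2,x_3; x_4,x_5; \ldots; x_k,x_1)$, which also ranges over all of $[m]^k$. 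A basis tensor $\ket{x_1x_2}\cdots\ket{x_kx_1}$ of $\IMM_m^k$ contributes the matrix entry indexed by row $(x_1,\ldots,x_k)$ and column $(x_1,\ldots,x_k)$ — the \emph{same} tuple on both sides, because the two groupings between them mention each coordinate exactly twice and consistently. Hence the flattened matrix of $\IMM_m^k$ is supported exactly on the diagonal $\{((x_1,\ldots,x_k),(x_1,\ldots,x_k))\}$, with all diagonal entries equal to $1$: it is the $m^k \times m^k$ identity matrix, whose rank is $m^k$.

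Now I would argue the support-rank lower bound. Let $t$ be any tensor with the same support as $\IMM_m^k$. Under the same flattening, $t$'s matrix $M_t$ is supported on exactly the same diagonal set of entries (nonzero precisely on the diagonal, zero off it), so $M_t$ is a diagonal matrix with all $m^k$ diagonal entries nonzero, hence $\rk M_t = m^k$. Since $\rk M_t \leq \rank(t)$ always, we get $\rank(t) \geq m^k$, and taking the minimum over all $t$ with the given support gives $\supprank(\IMM_m^k) \geq m^k$. For the border version, if $(t_j)$ is a sequence of tensors with $\rank(t_j) \leq r$ converging to some tensor $t'$ with the same support as $\IMM_m^k$, then $\rk M_{t_j} \leq r$ for all $j$; but $M_{t_j} \to M_{t'}$ entrywise and $M_{t'}$ is the nonzero-diagonal matrix above of rank $m^k$, so by lower semicontinuity of rank $r \geq m^k$. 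Thus $\bsupprank(\IMM_m^k) \geq m^k$ as claimed. The main obstacle — really the only subtlety — is verifying carefully that the two index groupings of the even cycle are "complementary" in the sense that every coordinate $x_i$ appears once on each side and that the only basis elements surviving in any fixed (row, column) cell are those with matching tuples; this is exactly where evenness of $k$ is used, since for odd $k$ no such bipartition of the $k$ factors into two index-disjoint... rather, index-consistent halves exists, and indeed the proposition is false for odd $k$.
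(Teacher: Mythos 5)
Your proof is correct and is essentially the paper's argument: group the odd-indexed and even-indexed tensor factors, observe that the resulting flattening of any tensor with the support of $\IMM_m^k$ has the support of a permutation (identity, after relabelling columns by the cyclic shift) matrix and hence rank $m^k$, and pass to border support rank by lower semicontinuity of matrix rank. The only cosmetic point is that the row and column tuples are cyclic shifts of one another rather than literally equal, but since each determines $(x_1,\ldots,x_k)$ uniquely this does not affect the rank conclusion.
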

\begin{proof}
Let $t$ be a tensor with the same support as $\IMM_m^k \in (\CC^{m^2})^{\otimes k}$.
Label the players with the numbers $1,2,\ldots,k$. Group the \emph{even} players together and group the \emph{odd} players together and flatten the tensor $t$ accordingly into a matrix $A$ in $(\CC^{m^2})^{\otimes{k/2}}\otimes (\CC^{m^2})^{\otimes{k/2}}$. The matrix $A$ has the same support as the identity matrix in $(\CC^{m^2})^{\otimes{k/2}}\otimes (\CC^{m^2})^{\otimes{k/2}}$ and thus has rank $m^k$.
\end{proof}

Note that for odd $k$ the above proof yields the lower bound $m^{k-1} \leq \bsupprank(\IMM_{m}^k)$. We will show in \cref{Yfl} that this lower bound is not tight.

\paragraph{Nontrivial 3-player and 5-player quantum protocols.} In the 3-player situation, Strassen's celebrated decomposition of the tensor $\IMM_2^3=\langle 2,2,2\rangle$ into a sum of 7 simple tensors \cite{strassen1969gaussian} gives a nontrivial protocol for $\EQ_1^{C_3}$, and thus $\NQ(\EQ_1^{C_3}) \leq \log_2(7)$. 
We show that for 5 players there also exists a nontrivial protocol for $\EQ_1^{C_5}$, as follows. Recall that we have defined $\IMM_2^5 = \sum_{i\in [2]^5} \ket{i_1 i_2}\!\ket{i_2 i_3}\!\ket{i_3 i_4}\!\ket{i_4 i_5}\!\ket{i_5 i_1}$. Observe that an upper bound $\rank(\IMM_2^5) \leq r$ implies $\rank(\IMM_n^5) \leq \Oh(n^{\log_2(r)})$ by taking tensor powers of $\IMM_2^5$.

\begin{theorem}\label{smallprot}
$\rank(\IMM_2^5) \leq 31$, and thus $\NQ(\EQ_1^{C_5}) \leq \log_2(31)$.
\end{theorem}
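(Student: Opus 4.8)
The goal is to exhibit an explicit rank-31 decomposition of $\IMM_2^5 \in (\CC^4)^{\otimes 5}$, and the cleanest route is to build it up from Strassen's $\langle 2,2,2\rangle = \IMM_2^3$ decomposition by "opening up the cycle." First I would observe that the iterated matrix multiplication tensor obeys a gluing identity: contracting $\IMM_2^3$ and $\IMM_2^4$ along a shared matrix index produces $\IMM_2^5$, and more usefully, $\IMM_2^5$ can be obtained from $\langle 2,2,2\rangle \otimes \langle 2,2,2\rangle$ (or from $\langle 2,2,2\rangle$ composed with a suitable bilinear map) by identifying two pairs of slots and tracing. Concretely, $\tr(A_1A_2A_3A_4A_5) = \tr\bigl((A_1A_2A_3)(A_4A_5)\bigr)$, so $\IMM_2^5$ is a restriction of $\langle 2,2,2 \rangle$ with the first two of its three "legs" each replaced by a product of inputs. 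The standard fact I would invoke (the submultiplicativity/composition behaviour of matrix-multiplication tensor rank) is that $\rank(\langle a,b,c\rangle \text{ with legs further multiplied})$ can be bounded by combining a decomposition of $\langle 2,2,2\rangle$ with decompositions of the smaller pieces.

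The arithmetic target is the number $31 = 7\cdot 4 + 3$, which strongly suggests the following scheme: write $\IMM_2^5$ as the "product" of an outer $\langle 2,2,2\rangle$-shaped structure whose three legs carry, respectively, $A_1$, $A_2A_3$, and $A_4A_5$. Apply Strassen's $7$-term decomposition to the outer structure. Of the $7$ terms, each one is a triple product of linear forms in the three legs; a term that is linear in the $A_1$-leg contributes a rank-$1$ tensor directly ($3$ such terms in Strassen's layout contribute the "$+3$"), while each of the remaining $4$ terms requires expanding a linear form in the entries of a $2\times 2$ product $A_2A_3$ (or $A_4A_5$) — and a single $2\times2\times2$ matrix product, i.e. a bilinear form valued in $\CC^{2\times 2}$, contributes factor $\le 4$ rather than the naive $\le 8$, because the relevant "one matrix times one matrix, read off along a fixed linear functional" map has rank $4$. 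Carefully matching Strassen's seven products to which legs they touch, and checking that exactly four of them need the rank-$4$ blow-up while three stay rank-$1$, gives $4\cdot 4 + 3\cdot 1 = 19$ — so I would instead need the split $31 = 7 + 24$ or a two-sided version; I would experiment with splitting as $\langle 2,2,2\rangle$ over legs $(A_1A_2,\,A_3A_4,\,A_5)$, apply Strassen, and track how many of the $7$ products are bilinear (needing a rank-$\le 4$ matrix-product gadget) versus linear. The precise bookkeeping of which of Strassen's seven rank-one terms involves one, two, or zero "compressed" legs is exactly where the constant $31$ must be pinned down, and I would simply write out Strassen's seven triples $(u_i, v_i, w_i)$ explicitly and substitute.

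Once a candidate decomposition $\IMM_2^5 = \sum_{i=1}^{31} x^1_i \otimes x^2_i \otimes x^3_i \otimes x^4_i \otimes x^5_i$ is written down (with each $x^j_i \in \CC^4 = \CC^{2\times 2}$ given by an explicit small matrix of $0,\pm1$ entries), the verification is a finite linear-algebra check: expand the sum, collect coefficients of each of the $4^5$ basis tensors $\ket{i_1i_2}\ket{i_2i_3}\cdots\ket{i_5i_1}$, and confirm it matches the $\pm1/0$ pattern of $\IMM_2^5$. This is routine and I would present it as "a direct computation" (optionally deferring the explicit table to an appendix or ancillary file). The conclusion $\NQ(\EQ_1^{C_5}) \le \log_2(31)$ is then immediate from \cref{cycmamu} together with the fact that $\supprank \le \rank$ and $\NQ(f) = \log_2 \supprank(\IMM_2^5)$.

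**Main obstacle.** The real work is combinatorial rather than conceptual: finding the right way to factor the $5$-cycle trace through Strassen's $\langle 2,2,2\rangle$ so that the induced rank bound is exactly $31$ and not something larger like $49$ (two nested Strassens) or $35$. I expect the decisive step to be choosing the grouping of the five matrices into three "super-legs" and then exploiting that some of Strassen's seven products only see a single original matrix on some super-leg (contributing $1$ instead of $4$). If a clean such grouping does not hand me $31$ directly, the fallback is a hand-optimized search: start from $\langle 2,2,2\rangle^{\otimes 2}$-type constructions, then apply known small tricks (e.g. sharing terms between the two $\langle 2,2,2\rangle$ blocks, or using that $\langle 2,2,4\rangle$ has rank $14$ rather than $16$ via a single Strassen step) to shave the count down to $31$. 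Getting and then rigorously verifying that final explicit list is the bottleneck; everything downstream is immediate.
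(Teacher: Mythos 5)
There is a genuine gap: the entire content of the theorem is the explicit rank-$31$ decomposition, and your plan never produces one. Your proposed route --- group the five matrices into three super-legs $(A_1,\,A_2A_3,\,A_4A_5)$, apply a rank-$7$ decomposition of $\langle 2,2,2\rangle$, and expand the compressed legs --- cannot reach $31$, and in fact cannot even beat the trivial bound $32$. Every term of a rank-$7$ decomposition of $\langle 2,2,2\rangle$ is linear in \emph{all three} legs, so no term "contributes a rank-$1$ tensor directly"; the cost of a term is governed by the matrix ranks of its three linear functionals, since pulling a functional $X\mapsto \tr(MX)$ through a product of two $2\times 2$ matrices gives a bilinear form of rank $2\rank(M)$ (its Gram matrix is $M^{T}\otimes I_2$), never $1$. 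In Strassen's decomposition one term has all three functionals of matrix rank $2$ and the other six have rank-$1$ functionals, so the grouping yields at best $1\cdot(2\cdot 2)(2\cdot 2)+6\cdot(2\cdot 2)=40$; and by de~Groote's uniqueness theorem (all rank-$7$ decompositions of $\langle 2,2,2\rangle$ are equivalent under the sandwiching symmetry, which preserves these functional ranks) no other choice of rank-$7$ outer decomposition changes this count. Your own arithmetic reflects the problem: the tallies $7\cdot 4+3$, $4\cdot 4+3\cdot 1=19$, and "$31=7+24$" do not correspond to any consistent accounting, and you explicitly defer the decisive step to "experimentation" and a "hand-optimized search". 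The downstream steps (finite verification of a candidate decomposition, and $\NQ(\EQ_1^{C_5})\le\log_2 31$ via \cref{cycmamu} and $\supprank\le\rank$) are fine, but without the decomposition there is no proof.

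For comparison, the paper does not obtain the bound by composing Strassen's algorithm at all: it writes down an explicit cyclically symmetric decomposition. With $\ket{\boldsymbol{-}}=\ket{1}-\ket{2}$, $\ket{\boldsymbol{+}}=\ket{1}+\ket{2}$, $\ket{\Phi^+}=\ket{11}+\ket{22}$, it exhibits a tensor $t$ that is a sum of $3$ simple tensors and checks directly that $\IMM_2^5=\Cyc_5\bigl(\Sym_2(t)\bigr)+\ket{\Phi^+}^{\otimes 5}$, which expands into $5\cdot 2\cdot 3+1=31$ simple tensors (the same symmetric ansatz that recovers $7=3\cdot 2\cdot 1+1$ for $\langle 2,2,2\rangle$, and which the authors generalize to $\rank(\IMM_2^k)\le 2^k-1$ for odd $k$). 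If you want to salvage your write-up, you would need to either find such an explicit list yourself (a symmetry ansatz of this kind is the natural search space) or abandon the reduction-to-Strassen framing, which provably tops out at $40$.
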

\begin{proof}
Let $\ket{\boldsymbol{-}}\coloneqq \ket{1}-\ket{2}$, $\ket{\boldsymbol{+}} \coloneqq \ket{1} + \ket{2}$ and $\ket{\Phi^+} = \ket{11} + \ket{22}$. Let $\Cyc_{5} \coloneqq \sum_{\sigma\in C_5} \sigma$ be the cyclic symmetrizer acting on $(\CC^{4})^{\otimes 5}$ by permuting the 5 parties, and moreover let $\Sym_2 \coloneqq \sum_{\sigma\in S_2} \sigma$ be a `local symmetrizer' acting diagonally on $(\CC^2)^{\otimes 10}$ by permuting the basis states $\ket{1}$ and $\ket{2}$ of each $\CC^2$. Let
\begin{alignat*}{3}
t \coloneqq & \,-\, \ket{\boldsymbol{-}1}\ket{11}\ket{11}&\ket{1\boldsymbol{+}}\ket{22} &\\[0.1em]
& \,-\, \ket{\boldsymbol{-}1}\ket{12}\ket{21}&\ket{1\boldsymbol{+}}\ket{22} &\\[0.1em]
& \,-\, \ket{\Phi^+} \ket{22} \ket{\boldsymbol{-}1}&\ket{1\boldsymbol{+}}\ket{22}&.
\end{alignat*}
By direct computation, we see that $\IMM_2^5 = \Cyc_5\bigl(\Sym_2(t)\bigr) + \ket{\Phi^+}^{\otimes 5}$. We observe that the right hand side yields a sum of 31 simple tensors.
\end{proof}

We have a proof generalizing \cref{smallprot} to $\rank(\IMM_2^k) \leq 2^k - 1$ for all odd $k$, which will appear in a forthcoming paper.

\paragraph{Quantum lower bound with Young flattenings.} Let $t\in V_1\otimes V_2\otimes V_3$ be some 3-tensor. By grouping $V_1$ and $V_2$, the tensor $t$ can be viewed as a matrix $A\in (V_1\otimes V_2) \otimes V_3$; this is called a \emph{flattening}. The rank of the flattening $A$ is a lower bound for the border rank of~$t$ and thus we obtain lower bounds on the border rank of tensors by computing the rank of their flattenings. However, this type of lower bound can never be bigger than the dimension of any local space $V_i$, and there do exist tensors with border rank larger than the local dimensions, for example the matrix multiplication tensor $\langle 2,2,2\rangle$.

One approach to overcome this `local dimension limitation' is as follows. We let ${\phi : V_2 \to W_1 \otimes W_2}$ be a linear map such that $\rank(\phi(v)) \leq e$ for all $v\in V_2$. By applying~$\phi$ to the central tensor leg of $t$, we transform $t$ into a 4-tensor $s\in V_1\otimes W_1 \otimes W_2 \otimes V_3$. Next, we flatten $s$ to a matrix $A\in (V_1 \otimes W_1) \otimes (W_2 \otimes V_3)$. The rank of $A$ divided by $e$ is a lower bound for the border rank of $t$. We will be using a specific linear map $\phi$ which originates from the representation theory of the general linear group. When one takes such representation theoretic maps $\phi$ to construct a flattening as above one speaks of a \emph{Young flattening} \cite{landsberg2011new}. An early appearance of this type of flattening can be recognized in the work of Strassen~\cite{strassen1983rank}. The following lower bound is obtained with a Young flattening.

\begin{theorem}\label{Yfl} For odd $k\geq 3$, \,
$(2n^2-n)n^{k-3} \leq \bsupprank(\IMM_n^k)$. As a consequence, we have the lower bound $(k-1)n + \log_2(2-\tfrac{1}{n}) \leq \borderNQ(\EQ_n^{C_k})$.
\end{theorem}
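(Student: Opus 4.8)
The plan is to construct a Young flattening for $\IMM_n^k$ whose rank, divided by the rank bound of the flattening map applied to a single leg, gives the claimed lower bound $(2n^2-n)n^{k-3}$. Since the border support rank only cares about the support, I must choose the flattening map so that the rank I compute is forced by the support pattern alone and is therefore invariant under replacing $\IMM_n^k$ by any tensor with the same support. The natural starting point is $k=3$, where $\IMM_n^3 = \langle n,n,n\rangle$, and then bootstrap to odd $k$ by an "inflation'' argument along the extra legs.

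First I would treat the case $k=3$. Take $\phi : \CC^{n\times n} \to W_1\otimes W_2$ to be the classical Koszul-type Young flattening used by Landsberg--Ottaviani and originally by Strassen: think of $\CC^{n\times n}= \CC^n\otimes \CC^n$ and let $\phi$ send $v$ to the linear map $\Lambda^{p}\CC^n \to \Lambda^{p+1}\CC^n$ (or the $\mathfrak{so}$-type map) obtained by wedging with the appropriate vector-valued object; for the matrix multiplication tensor the relevant map has $e = $ (the maximal rank of $\phi(v)$) a fixed combinatorial quantity, and applying it to the middle leg and flattening $(V_1\otimes W_1)\otimes(W_2\otimes V_3)$ produces a matrix of rank $2n^2 - n$ (this is exactly the Strassen bound $\borderrank\langle n,n,n\rangle \geq 2n^2-n$). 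The key point I would emphasise is that the resulting matrix is block-structured according to the support of $\langle n,n,n\rangle$: its nonzero pattern, and hence its rank, depends only on which triples $(x_1x_2,x_2x_3,x_3x_1)$ appear, so the same rank computation goes through verbatim for any tensor sharing the support of $\langle n,n,n\rangle$. This yields $2n^2-n \leq \bsupprank(\IMM_n^3)$.

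Next I would handle general odd $k \geq 5$. Write $\IMM_n^k = \sum_{x\in[n]^k} \ket{x_1x_2}\cdots\ket{x_kx_1}$ and observe that fixing the "internal'' indices $x_4,\ldots,x_k$ and contracting legs $4,\ldots,k$ against suitable rank-one functionals reduces $\IMM_n^k$ to $\IMM_n^3$ (in fact to a direct sum of copies of $\langle n,n,n\rangle$ indexed by the choices of the fixed indices). Concretely: contract leg $j$ (for $4\le j\le k$) with $\bra{a\,a}$ for a fixed $a\in[n]$ appropriately threaded; this glues the chain $x_3\to x_4\to\cdots\to x_k\to x_1$ down to a single link $x_3\to x_1$ while leaving $n^{k-3}$ independent "colourings'' untouched if instead one keeps those indices free and groups them into one of the three surviving legs. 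Doing the latter, one sees $\IMM_n^k$ has, as a flattening into three groups (leg $1$ with part of the internal indices; leg $2$; leg $3$ with the rest), the support of a block-diagonal tensor with $n^{k-3}$ blocks each equal to the support of $\langle n,n,n\rangle$. Applying the $k=3$ Young flattening blockwise and using that border support rank is superadditive over such a direct-sum-of-supports structure (each block contributes independently to the flattening rank, divided by the same $e$) gives $(2n^2-n)\,n^{k-3} \leq \bsupprank(\IMM_n^k)$.

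Finally, the communication-complexity consequence is immediate from \cref{mainth} and \cref{cycmamu}: since $\borderNQ(\EQ_n^{C_k}) = \log_2 \bsupprank(\IMM_{2^n}^k)$, substituting $n \mapsto 2^n$ in the tensor bound gives
\[
\borderNQ(\EQ_n^{C_k}) \;\geq\; \log_2\!\bigl((2\cdot 2^{2n}-2^n)\,2^{n(k-3)}\bigr) \;=\; (k-1)n + \log_2\!\bigl(2-\tfrac{1}{2^n}\bigr),
\]
which, after renaming, is the stated bound $(k-1)n + \log_2(2-\tfrac1n)$ (with $n$ now the bit-length; the two uses of $n$ should be reconciled, but this is bookkeeping). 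I expect the main obstacle to be the first step: pinning down the precise Young flattening map $\phi$ for $\langle n,n,n\rangle$ that achieves rank exactly $2n^2-n$ and, crucially, verifying that this rank is a function of the support alone — i.e.\ that one never relies on cancellation among the actual coefficients of $\IMM_n^k$, only on their being nonzero. The block-diagonalisation over the internal indices and the superadditivity of the flattening rank are then comparatively routine.
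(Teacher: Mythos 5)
Your step 1 contains the real gap. You assert that the flattened matrix's ``nonzero pattern, and hence its rank, depends only on which triples appear'' — but the rank of a matrix is \emph{not} determined by its zero pattern; that is exactly the difficulty that separates border support rank from border rank, and it is the actual content of the paper's proof. The paper works from the outset with a tensor $t_1=\sum_i \alpha_{i_1i_2i_3}\ket{i_1i_2}\!\ket{i_2i_3}\!\ket{i_3i_1}$ with \emph{arbitrary} nonzero coefficients, first compresses the middle leg via $\ket{i_2i_3}\mapsto\ket{i_2+i_3-1}$ into $\CC^{2n-1}$, applies the wedge map with $p=n-1$, and then proves, via a partial order on the basis elements (following Landsberg--Micha\l{}ek) and an induction, that each block $A_{i_1}$ of the flattening is upper triangular up to permutation with the $\alpha$'s on the diagonal — hence of full rank $n\binom{2n-1}{n-1}$ for \emph{every} choice of nonzero coefficients. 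You explicitly flag this verification as ``the main obstacle'' and leave it unproved; without it your argument only reproduces the Strassen/Landsberg--Ottaviani border rank bound for $\IMM_n^3$ itself, not a bound on $\bsupprank(\IMM_n^3)$ (the paper's remark that it improves Ikenmeyer's border support rank bound indicates that this transfer is precisely what is nontrivial).

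The reduction to odd $k\geq 5$ is also not ``comparatively routine'' as you describe it, because the grouping matters. If ``leg 1 with part of the internal indices; leg 2; leg 3 with the rest'' means contiguous arcs of the cycle, then the internal indices appear only in the column (or only in the row) labels of the flattening; the putative blocks then share rows, their ranks do not add, and the bound is capped by the row dimension at roughly $2n^2-n$ (or one extra factor of $n$), far short of $(2n^2-n)n^{k-3}$. What is needed is to keep a single party as the middle leg and to split the remaining $k-1$ parties \emph{alternately} around the cycle, so that every index shared by two non-middle parties occurs in both a row label and a column label; only then is the flattening genuinely block diagonal, with $n^{k-2}$ blocks each structurally identical to the $k=3$ block $A_{i_1}$, and the ranks add to $n^{k-1}\binom{2n-1}{n-1}$, giving the claimed bound after dividing by $\binom{2n-2}{n-1}$. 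Relatedly, ``border support rank is superadditive over such a direct-sum-of-supports structure'' is not a legitimate general principle (border rank is not additive under direct sums, by Schönhage); the only additivity available is that of matrix rank for a genuinely block-diagonal flattening, which is exactly what the alternating grouping buys you. Your final step deducing the communication bound from \cref{mainth} and \cref{cycmamu} is fine, and your reading $(k-1)n+\log_2(2-2^{-n})$ correctly reconciles the two uses of $n$ in the statement.
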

\begin{proof}
Let $k=3$. The proof for odd $k>3$ goes similarly after having grouped the $k$ parties appropriately to 3 parties. For a vector space $V$, let $\wedge^a V$ be the $a$th exterior power of $V$. Define the linear map
\begin{alignat*}{1}
\phi \,\colon &\CC^{2n-1} \,\to\,  \wedge^p \CC^{2n-1} \otimes \wedge^{p+1} \CC^{2n-1} \\
	   &\!\ket{j} \,\mapsto\, \!\!\!\sum_{j_1<\cdots<j_p}\!\!\!\!  \ket{j_1}\!\wedge\! \cdots \!\wedge\! \ket{j_p} \,\otimes\, \ket{j_1}\!\wedge\! \cdots \!\wedge\! \ket{j_p}\! \wedge\! \ket{j},
\end{alignat*}
and note that the rank of the matrix $\phi(v)$ equals $\binom{2n-2}{p}$ for any $v\in \CC^{2n-1}$. We consider the tensor
\[
t_1 \coloneqq \sum_{i} \alpha_{i_1, i_2, i_3}\ket{i_1 i_2}\ket{i_2 i_3} \ket{i_3 i_1}\, \in\, \CC^{n^2} \otimes \CC^{n^2} \otimes \CC^{n^2}\!,
\]
where $i$ runs over $[n]^3$ and the $\alpha_{i_1, i_2, i_3}$ are nonzero complex numbers. The border rank of $t_1$ is at least the border rank of
\[
t_2 \coloneqq \sum_{i} \alpha_{i_1, i_2, i_3}\ket{i_1 i_2}\ket{i_2 + i_3 - 1} \ket{i_3 i_1}\, \in\, \CC^{n^2} \otimes \CC^{2n-1} \otimes \CC^{n^2}\!.
\]
Apply $\phi$ to the central tensor leg of $t_2$ and then flatten to obtain
\[
A \coloneqq  \sum_{i} \sum_{j_1<\cdots<j_p} \alpha_{i_1, i_2, i_3} \ket{i_1 i_2}  \ket{j_1}\! \wedge\! \cdots \!\wedge\! \ket{j_p} \, \otimes\, \ket{j_1} \!\wedge\! \cdots \!\wedge\! \ket{j_p}\! \wedge\! \ket{i_2 + i_3 - 1} \ket{i_3 i_1}.
\]
View $A$ as a direct sum of $n$ matrices $A_{i_1} \in (\CC^n \otimes \wedge^p \CC^{2n-1}) \otimes (\wedge^{p+1} \CC^{2n-1} \otimes \CC^n)$; the matrix $A_{i_1}$ corresponds to the linear map
\[
f_{i_1} \coloneqq  \ket{i_2}  \ket{j_1}\! \wedge\! \cdots \!\wedge\! \ket{j_p} \mapsto \sum_{i_3}  \alpha_{i_1, i_2, i_3} \ket{j_1} \!\wedge\! \cdots \!\wedge\! \ket{j_p}\! \wedge\! \ket{i_2 + i_3 - 1} \ket{i_3}.
\]
Let $p=n-1$. We claim that every matrix $A_{i_1}$ is upper triangular with elements $\alpha_{i_1,i_2,i_3}$ on the diagonal, up to permutations of the rows and columns. Assuming the claim is true, we get that $\rank(A) = \sum_{i_1}\rank(A_{i_1}) = n \dim(\CC^n\otimes \wedge^{n-1} \CC^{2n-1}) = n^2 \binom{2n-1}{n-1}$.  This implies the lower bound $\bsupprank(\IMM_n^3) \geq n^2 \binom{2n-1}{n-1} / \binom{2n-2}{n-1} = 2n^2-n$.

To prove this claim we define a partial order on the basis elements $\ket{j_1}\wedge\cdots \wedge\ket{j_n}\otimes \ket{\ell}$ of the target space of $A_{i_1}$. We will use the same partial order as Landsberg and Michałek~\cite{landsberg2016geometry}. Denote the basis elements of the target space by $(P,\ell)$ with $P$ an $n$-subset of~$[2n-1]$ and $\ell \in [n]$. Let $(P_1,\ell_1)$ and $(P_2, \ell_2)$ be two such basis elements and define $\ell\coloneqq \min(\ell_1,\ell_2)$. We say $(P_1,\ell_1) < (P_2, \ell_2)$
\begin{enumerate}
\item if the ordered sequence of the $\ell$ smallest elements in $P_2$ is lexicographically smaller than the ordered sequence of the $\ell$ smallest elements in $P_1$;
\item or if the sequences of $\ell$ smallest elements are equal and $\ell_1 < \ell_2$.
\end{enumerate}
One checks that this defines a partial order and that the unique minimal element in this order is $(\{n,\ldots,2n-1\},1)$. For example, with $n=2$ the partial order has the following Hasse diagram.
\begin{center}
\begin{tikzpicture}[node distance=0.8em, >=latex]
    % First, locate each of the nodes and name them
    \node (1) at (0,0) {$(\{1,2\},2)$};
    \node [below = of 1] (2)  {$(\{1,3\},2)$};
    \node [below right=of 2, xshift=-1em] (3) {$(\{1,2\},1)$};
	\node [below left =of 2, xshift=1em] (4) {$(\{1,3\},1)$};
	\node [below right=of 4, xshift=-1em] (5) {$(\{2,3\},2)$};
	\node [below=of 5] (6) {$(\{2,3\},1)$};

    % Now draw the lines:
    \draw [thick, shorten <=-2pt, shorten >=-2pt, ->] (2) -> (1);
    \draw [thick, shorten <=-2pt, shorten >=-2pt, ->] (3) -> (2);
    \draw [thick, shorten <=-2pt, shorten >=-2pt, ->] (4) -> (2);
    \draw [thick, shorten <=-2pt, shorten >=-2pt, ->] (5) -> (4);
    \draw [thick, shorten <=-2pt, shorten >=-2pt, ->] (5) -> (3);
    \draw [thick, shorten <=-2pt, shorten >=-2pt, ->] (6) -> (5);
\end{tikzpicture}
\end{center}

We prove the claim by induction on $<$, with the minimal element as a base case. For now let all the $\alpha_{i_1,i_2,i_3}$ be 1. First, under $A_{i_1}$ we have
\[
\ket{n} \otimes \ket{n+1}\wedge \cdots \wedge \ket{2n-1}\mapsto \ket{n}\wedge\cdots\wedge\ket{2n-1} \otimes \ket{1},
\]
so the minimal element $(\{n,\ldots,2n-1\},1)$ is in the image of $A_{i_1}$. Let $(P,\ell)$ be in the target space of $A_{i_1}$ and assume that every $(P',\ell')$ with $(P',\ell')< (P, \ell)$ is in the image. Write $P = (p_1, \ldots, p_n)$ with $p_1\leq\cdots\leq p_n$. Under $A_{i_1}$ we have,
\[
\ket{p_1}\wedge \cdots \widehat{p_\ell} \cdots \wedge \ket{p_n} \otimes \ket{1+p_\ell - \ell} \mapsto \sum_m \ket{p_1}\wedge\cdots\widehat{p_\ell}\cdots\wedge\ket{p_n}\wedge \ket{p_\ell - \ell + m} \otimes \ket{m}.
\]
Taking $m=\ell$, one sees that the basis element $(P,\ell)$ is present in the sum. Moreover, for any other $(P',m)$ appearing in the sum we have $(P',m) < (P,\ell)$. Indeed, if $m>\ell$, then $p_\ell - \ell + m > p_\ell$, so the smallest $\ell$ elements in $P'$ are lexicographically larger than the smallest $\ell$ elements in $P$, meaning $(P',m) < (P,\ell)$ by rule 1; if $m < \ell$, then $p_m\leq p_\ell-\ell+m < p_\ell$, so the $m$ smallest elements of $P'$ and $P$ are equal, meaning $(P',m) < (P,\ell)$ by rule 2.  Therefore, the basis element $(P,\ell)$ is in the image. This argument shows that $A_{i_1}$ has full rank. Moreover, this argument shows that, up to a permutation of the rows and columns, the matrix $A_{i_1}$ is upper triangular with ones on the diagonal. Repeating this argument with general values for  $\alpha_{i_1,i_2,i_3}$ proves the claim.
\end{proof}

\begin{remark}
The lower bound in \cref{Yfl} improves a lower bound of Ikenmeyer on the border support rank of $\IMM_n^3$ \cite[8.2.17]{ikenmeyer2013geometric}.
\end{remark}

\section{Strassen's laser method for iterated matrix multiplication}\label{lasersec} In this section we show that $\NQ(\EQ_n^{C_k}) < kn$ for odd $k$. We will prove this result in the language of algebraic complexity theory.

\begin{definition}
Define $\omega_k \coloneqq \inf \{ \alpha \in \RR \mid \rank( \IMM_n^k ) \in \Oh(n^\alpha)\}$. We call this the \emph{exponent} of iterated matrix multiplication. Define $\omega_{\mathrm{s},k} \coloneqq \inf \{ \alpha \in \RR \mid \supprank( \IMM_n^k ) \in \Oh(n^\alpha)\}$. We call this the \emph{support rank exponent} of iterated matrix multiplication.
\end{definition}

Asymptotically, we have $\NQ(\EQ_n^{C_k}) \leq \omega_{\mathrm{s},k}\, n + \Oh(1) \leq \omega_k\, n + \Oh(1)$. The exponents $\omega_3$ and $\omega_{\mathrm{s},3}$ are known as $\omega$ and $\omega_{\mathrm{s}}$ in the literature. The support rank exponent of matrix multiplication was first studied by Cohn and Umans \cite{cohn2013fast}. The best upper bound on~$\omega_s$ comes from the upper bound $\omega \leq 2.3728639$ of Le~Gall \cite{le2014powers}. Interestingly, Cohn and Umans show the relationship
\[
\omega \leq (3 \omega_{\mathrm{s}} - 2)/2.
\]
Therefore, one way of finding upper bounds on $\omega$ is to construct an efficient quantum communication protocol for the cyclic equality problem $\EQ_n^3$. 

For any $k$ we have $k-1 \leq \omega_k \leq k$, and if $k$ is even, then $\omega_k = k$ (\cref{evencase}). The aim of this section will be to show: if $k\geq 3$ is odd, then
\[
\omega_k < k.
\]

\paragraph{Schönhage $\tau$-theorem.}
In this section we will generalize some tools for obtaining upper bounds on the exponent of $\omega_3$ to all exponents $\omega_k$, in particular, we generalize the Schönhage $\tau$-theorem. The proofs in this section are straightforward generalizations of the proofs for $k=3$ which can be found in \cite{blaser2013fast}. In the next paragraph, we will use Strassen's laser method to show that $\omega_k < k$ for all odd $k$.

First we recall an important relationship between border rank and rank. We use the following more precise notion of border rank. Let $h\in \NN$ and let $t$ be a tensor in $\CC^{\otimes m_1} \otimes \cdots \otimes \CC^{\otimes m_k}$. Define $\rank_h(t)$ to be the minimum number $r$ such that there exist vectors $v_i^j \in (\CC[\varepsilon])^{m_j}$ that satisfy $\sum_{i=1}^r v_i^1\otimes \cdots\otimes v_i^k = \varepsilon^h t + \Oh(\varepsilon^{h+1})$. A well-known but nontrivial result is that $\borderrank(t) = \min_h \rank_h(t)$. It is not hard to show that $\rank_{h+h'}(t\otimes t') \leq \rank_h(t) \rank_{h'}(t')$. The relationship we are talking about is the following.

\begin{proposition}
For every $h,k \in \NN$, there is a number $c_h$ such that for all tensors $t \in \CC^{m_1}\otimes \cdots \otimes \CC^{m_k}$, $\rank(t) \leq c_h \rank_h(t)$. The number $c_h$ depends polynomially on $h$.
\end{proposition}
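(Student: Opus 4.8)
The plan is to convert a border-rank decomposition of $t$ over the polynomial ring $\CC[\varepsilon]$ into an honest rank decomposition by the standard ``interpolation'' or ``degeneration'' trick: a relation holding modulo $\varepsilon^{h+1}$ is a polynomial identity of bounded degree in $\varepsilon$, so evaluating at enough distinct values of $\varepsilon$ recovers the leading term $t$ with a controlled blow-up in the number of summands.

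First I would fix $t\in\CC^{m_1}\otimes\cdots\otimes\CC^{m_k}$ and set $r\coloneqq\rank_h(t)$, so there exist vectors $v_i^j\in(\CC[\varepsilon])^{m_j}$ with $\sum_{i=1}^r v_i^1\otimes\cdots\otimes v_i^k=\varepsilon^h t+\Oh(\varepsilon^{h+1})$. Each coordinate of each $v_i^j$ is a polynomial in $\varepsilon$; after discarding terms that cannot affect the coefficients of $\varepsilon^0,\ldots,\varepsilon^{h}$ in the product, I may assume every $v_i^j$ has degree at most $h$ in $\varepsilon$. Expanding the product, the tensor $P(\varepsilon)\coloneqq\sum_{i=1}^r v_i^1(\varepsilon)\otimes\cdots\otimes v_i^k(\varepsilon)$ is a tensor-valued polynomial of degree at most $kh$ in $\varepsilon$, and by hypothesis its coefficient of $\varepsilon^h$ equals $t$ while the coefficients of $\varepsilon^0,\ldots,\varepsilon^{h-1}$ vanish. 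Writing $P(\varepsilon)=\sum_{d=h}^{kh}\varepsilon^d T_d$ with $T_h=t$, I want to isolate $T_h$ as a $\CC$-linear combination of the tensors $P(\xi_0),\ldots,P(\xi_N)$ for suitably chosen distinct scalars $\xi_0,\ldots,\xi_N\in\CC\setminus\{0\}$. This is exactly Lagrange interpolation: with $N\coloneqq kh-h=(k-1)h$ distinct nonzero nodes $\xi_0,\ldots,\xi_N$ there are coefficients $\lambda_0,\ldots,\lambda_N\in\CC$ (the entries of the inverse of a Vandermonde-type matrix, divided by the appropriate powers of the $\xi_s$) such that $\sum_{s=0}^N\lambda_s P(\xi_s)=T_h=t$.

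Next I would count summands: each $P(\xi_s)$ is a sum of $r$ simple tensors (substitute $\varepsilon=\xi_s$ into each $v_i^j$), so $\lambda_s P(\xi_s)$ is also a sum of $r$ simple tensors (absorb $\lambda_s$ into the first factor), and hence $t=\sum_{s=0}^N\lambda_s P(\xi_s)$ is a sum of $(N+1)r=((k-1)h+1)\,r$ simple tensors. Thus $\rank(t)\leq c_h\rank_h(t)$ with $c_h\coloneqq (k-1)h+1$, which depends polynomially (indeed linearly) on $h$; note the statement already allows $c_h$ to depend on $k$. I should double-check the degree bookkeeping: truncating each $v_i^j$ to degree $\leq h$ is harmless because the product's coefficients of $\varepsilon^d$ for $d\leq h$ only involve factor-degrees summing to at most $h$, hence individual degrees at most $h$; and the top degree of the truncated product is then at most $kh$, giving $N=(k-1)h$ as claimed. (One can shave constants — only $h+1$ of the coefficients $T_0,\ldots,T_h$ need to be controlled and the relevant interpolation can be done with fewer nodes — but any polynomial-in-$h$ bound suffices.)

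The only genuinely delicate point, and the one I would state carefully, is the passage from the definition $\borderrank(t)=\min_h\rank_h(t)$ and the truncation step: one must be sure that replacing the $v_i^j(\varepsilon)$ by their degree-$\leq h$ truncations does not alter the required low-order coefficients of the product, and that after truncation the identity ``$\sum_i v_i^1\otimes\cdots\otimes v_i^k=\varepsilon^h t+\Oh(\varepsilon^{h+1})$'' still holds with the \emph{same} leading tensor $t$. This is immediate from the multilinearity of the tensor product in the polynomial entries, but it is the step where an imprecise argument could go wrong. Everything else — the Vandermonde invertibility over $\CC$, the linearity of the simple-tensor count — is routine, so I expect no real obstacle beyond writing the interpolation cleanly.
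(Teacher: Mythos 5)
Your proof is correct, but it takes a genuinely different route from the paper. The paper expands each vector $v_i^j$ into its $\varepsilon$-homogeneous components, multiplies out, and simply keeps the summands whose degrees sum to exactly $h$; this extracts $t$ directly as a sum of at most $\binom{h+k-1}{k-1}r$ simple tensors, so $c_h=\binom{h+k-1}{k-1}$, a degree-$(k-1)$ polynomial in $h$. You instead truncate each $v_i^j$ to degree $\leq h$ (correctly observing that this cannot disturb the coefficients of $\varepsilon^0,\ldots,\varepsilon^h$ of the product), so that $P(\varepsilon)=\sum_i v_i^1(\varepsilon)\otimes\cdots\otimes v_i^k(\varepsilon)$ has degree at most $kh$ with vanishing coefficients below $\varepsilon^h$, and then recover $t$ by Lagrange/Vandermonde interpolation from evaluations at $(k-1)h+1$ distinct nonzero nodes, each evaluation being a sum of $r$ simple tensors. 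Your bookkeeping is sound (the division by $\xi_s^h$ is legitimate since the nodes are nonzero, and the Vandermonde system is invertible over $\CC$), and it even yields the sharper constant $c_h=(k-1)h+1$, linear in $h$ rather than polynomial of degree $k-1$; the trade-off is that the paper's coefficient-extraction argument needs no truncation step, no choice of nodes, and no matrix inversion, which is why it is the one-line proof given there. Either constant suffices for the proposition and for the way it is used later (in the $\tau$-theorem only polynomial growth of $c_{hs}$ in $hs$ matters).
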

\begin{proof}
Let $t$ be a tensor in $\CC^{m_1}\otimes \cdots \otimes \CC^{m_k}$ with $\rank_h(t) = r$. Then there are $v_i^j \in (\CC[\varepsilon])^{m_j}$ such that
\[
\sum_{i=1}^r v_i^1 \otimes \cdots \otimes v_i^k = \varepsilon^ht + \Oh(\varepsilon^{h+1}).
\]
Decomposing every $v_i^j$ into $\varepsilon$-homogeneous components $v_i^j = \sum_{a_j=0}^h \varepsilon^{a_j} v_i^j(a_j)$, and collecting powers of $\varepsilon$ gives
\[
\sum_{i=1}^r \sum_{a_1,\ldots,a_k \in [h]} \varepsilon^{a_1+\cdots+a_k}\, v_i^1(a_1) \otimes \cdots \otimes v_i^k(a_k) = \varepsilon^h t + \Oh(\varepsilon^{h+1}).
\]
Taking only the summands such that $a_1 + \cdots + a_k = h$ gives a rank decomposition of $t$. There are $\binom{h+k-1}{k-1}r$ such summands.
\end{proof}

Next, we show that an upper bound on the border rank of `unbalanced' iterated matrix multiplication tensors yields and upper bound on $\omega_k$.

\begin{proposition}\label{symm}
If $\borderrank(\langle n_1,n_2,\ldots,n_k \rangle ) \leq r$, then $\omega_k \leq k \log_{n_1\cdots n_k} r$.
\end{proposition}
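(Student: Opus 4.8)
The plan is to exploit the symmetry of the (iterated) matrix multiplication tensor under cyclic relabelling of the three/$k$ tensor factors, combined with the standard trick that taking a tensor power of an ``unbalanced'' matrix multiplication tensor and then cyclically rotating the roles of the three groups of indices produces a ``balanced'' one. Concretely, for $k=3$ first: if $\borderrank(\langle n_1,n_2,n_3\rangle)\le r$, then by submultiplicativity of border rank under the tensor (Kronecker) product of matrix multiplication tensors, $\borderrank(\langle n_1^3,n_2^3,n_3^3\rangle) = \borderrank(\langle n_1,n_2,n_3\rangle^{\otimes 3})\le r^3$ once one identifies $\langle a,b,c\rangle\otimes\langle a',b',c'\rangle = \langle aa',bb',cc'\rangle$. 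Now the three tensors $\langle n_1^3,n_2^3,n_3^3\rangle$, $\langle n_2^3,n_3^3,n_1^3\rangle$, $\langle n_3^3,n_1^3,n_2^3\rangle$ are obtained from one another by permuting the tensor factors, hence all have border rank at most $r^3$; tensoring the three together gives
\[
\borderrank\bigl(\langle (n_1n_2n_3)^3,(n_1n_2n_3)^3,(n_1n_2n_3)^3\rangle\bigr)\le r^9.
\]
Writing $N \coloneqq (n_1 n_2 n_3)^3$ this says $\borderrank(\langle N,N,N\rangle)\le r^9$, and by the standard fact that $\omega$ is governed by border rank of the square matrix multiplication tensor — i.e. $\borderrank(\langle N,N,N\rangle)\le s$ implies $\omega\le 3\log_N s$ (this is the border-rank version of Proposition (the one just proved, ``$\rank(t)\le c_h\rank_h(t)$''), plus the usual power-and-Sch\"onhage argument which is exactly the $k=3$ material of \cite{blaser2013fast} that the paper is generalizing) — we get $\omega_3=\omega\le 3\log_N r^9 = 3\cdot 9 \log_N r = 27\log_{(n_1n_2n_3)^3} r = 27 \cdot \tfrac13\log_{n_1n_2n_3} r = 9\log_{n_1n_2n_3} r$. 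Hmm, this overshoots the claimed bound $3\log_{n_1n_2n_3}r$; the cleaner route is to avoid the triple tensoring: one only needs one cyclic rotation trick applied once, namely $\langle n_1,n_2,n_3\rangle\otimes\langle n_2,n_3,n_1\rangle\otimes\langle n_3,n_1,n_2\rangle = \langle n_1n_2n_3, n_1n_2n_3, n_1n_2n_3\rangle$ has border rank at most $r^3$, so with $n\coloneqq n_1n_2n_3$ we get $\borderrank(\langle n,n,n\rangle)\le r^3$, hence $\omega\le 3\log_n r^3 = 9\log_n r$ — still a factor $3$ off. I therefore expect the intended argument to apply the asymptotic sum/power inequality \emph{before} symmetrizing, so that the exponent $3$ from the three rotations is absorbed in the limit: take the $m$-th power, $\borderrank(\langle n_1,n_2,n_3\rangle^{\otimes m})\le r^m$, symmetrize to get $\borderrank(\langle n^m,n^m,n^m\rangle)^{1/3}$-type estimates, and let $m\to\infty$; this is exactly the content of Sch\"onhage-type arguments and yields $\omega \le 3\log_n r$ without the lossy factor.

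For general odd (indeed all) $k$, I would run the identical scheme with the $k$ cyclic rotations of $\langle n_1,\dots,n_k\rangle$ under the $\ZZ/k\ZZ$ action on tensor factors: each rotation has the same border rank $\le r$, the Kronecker product of all $k$ of them is $\langle n_1\cdots n_k,\dots,n_1\cdots n_k\rangle$ ($k$ equal parameters), so $\borderrank(\IMM$-type balanced tensor$)\le r^k$ on parameter $N=n_1\cdots n_k$; then invoke the $k$-party Sch\"onhage machinery developed in this section (the generalization of \cite{blaser2013fast} that the paper announces) which converts such a border-rank bound into $\omega_k \le k\log_N r$, i.e. $\omega_k\le k\log_{n_1\cdots n_k} r$. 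I would be careful to note that one must use the border-rank-to-rank conversion (the preceding Proposition) to pass from $\borderrank$ bounds to $\rank$ bounds with only a polynomial-in-$h$ loss, which disappears after taking high tensor powers and roots; this is the mechanism that makes the exponent come out to exactly $k$ times $\log_{n_1\cdots n_k} r$ rather than something larger.

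The main obstacle is bookkeeping the \emph{balanced/unbalanced} normalization so that the final exponent is exactly $k$ and not $k^2$ or some other multiple: one has to interleave ``take a large tensor power'', ``cyclically symmetrize'', and ``take the appropriate root / apply the Sch\"onhage $\tau$-bound'' in the right order, so that the $k$ rotations contribute a factor that is washed out in the asymptotic limit. The rest — identifying $\langle a,b,c\rangle\otimes\langle a',b',c'\rangle=\langle aa',bb',cc'\rangle$, invariance of border rank under permuting tensor legs, and submultiplicativity of border rank — is routine and already implicit in the material cited from \cite{blaser2013fast}. Once the normalization is pinned down, the statement follows directly.
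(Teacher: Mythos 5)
Your overall strategy---tensor together all $k$ cyclic shifts of $\langle n_1,\ldots,n_k\rangle$ to get the balanced tensor $\langle N,\ldots,N\rangle$ with $N=n_1\cdots n_k$ and border rank at most $r^k$, then pass from border rank to rank via the preceding proposition and take high tensor powers---is exactly the paper's proof. But as written your argument never closes, because of a slip in the base of the logarithm that creates a phantom factor of $k$. The square-case conversion is: $\borderrank(\langle N,\ldots,N\rangle)\leq s$ implies $\omega_k\leq k\log_{N^k}s=\log_N s$ (this is the statement being proved with all $n_i=N$, whose product of dimensions is $N^k$), \emph{not} $\omega_k\leq k\log_N s$ as you wrote for $k=3$. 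With the correct base, your ``cleaner route'' already gives the claimed bound with no loss: $\borderrank(\langle N,\ldots,N\rangle)\leq r^k$ yields $\omega_k\leq\log_N(r^k)=k\log_{n_1\cdots n_k}r$. Consequently there is no ``lossy factor $k$'' to be washed out by interleaving power-taking and symmetrization (these operations commute, so reordering changes nothing); the only quantity that must be absorbed in the asymptotic limit is the polynomial factor from the border-rank-to-rank conversion, which is what the paper does: from $\rank_{kh}(\langle N,\ldots,N\rangle)\leq r^k$ one gets $\rank(\langle N^s,\ldots,N^s\rangle)\leq c_{khs}\,r^{ks}$, hence $\omega_k\leq k\log_N r+\log_{N^s}(c_{khs})$, and the last term vanishes as $s\to\infty$. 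Your proposal instead leaves this ``bookkeeping'' as an unresolved obstacle, so the proof is not actually completed.

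A second, related problem is that you defer the final conversion to ``the $k$-party Sch\"onhage machinery developed in this section.'' In the paper that machinery (\cref{tau}, via \cref{corind}) is proved \emph{using} \cref{symm}, so invoking it here is circular---and the phrase ``machinery which converts such a border-rank bound into $\omega_k\leq k\log_N r$'' is essentially the proposition itself. All that is needed, and all the paper uses, is the border-rank-to-rank proposition together with the definition of $\omega_k$ applied to the powers $\langle N^s,\ldots,N^s\rangle$.
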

\begin{proof}
Let $N = n_1\cdots n_k$. There is an $h$ such that $\rank_h(\langle n_1,\ldots, n_k\rangle) \leq r$. By taking the tensor product of all cyclic shifts of $\langle n_1,\ldots,n_k\rangle$, we get $\rank_{kh}(\langle N,\ldots, N\rangle) \leq r^k$ and thus $\rank_{khs}(\langle N^s,\ldots,N^s\rangle) \leq r^{ks}$ for all $s$. Hence $\rank(\langle N^s,\ldots, N^s\rangle) \leq c_{khs} r^{ks}$ for some number~$c_{khs}$ which is constant in $N$. Therefore,
\[
\omega \leq \log_{N^s}(c_{khs} r^{ks}) = ks \log_{N^s}(r) + \log_{N^s}(c_{khs}).
\]
If $s$ goes to infinity then $\log_{N^s}(c_{khs})$ goes to zero, so  $\omega_k \leq k \log_N(r)$.
\end{proof}

The real workhorse is the following straightforward generalization of a theorem of Schönhage \cite{schonhage1981partial}.

\begin{proposition}[$k$-party Schönhage $\tau$-theorem]\label{tau} Suppose that $r>p$ and 
\[
\borderrank\Bigl(\bigoplus_{i=1}^p \langle n_1^i,n_2^i,\ldots,n_k^i\rangle\Bigr) \leq r.
\]
Define $\tau$ by $\sum_{i=1}^p \bigl(\prod_{j=1}^k n_j^i\bigr)^\tau = r$. Then $\omega_k \leq k\tau$
\end{proposition}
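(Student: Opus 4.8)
The plan is to mimic the classical $k=3$ argument: take a high tensor power of the direct sum, use Schönhage's trick to convert a \emph{disjoint} sum of matrix multiplication tensors into a \emph{single large} matrix multiplication tensor at the cost of a controlled blow-up, and then invoke \cref{symm} (the unbalanced-to-$\omega_k$ reduction). First I would pass to border rank with a degree bound: since $\borderrank(\bigoplus_{i=1}^p \langle n_1^i,\dots,n_k^i\rangle)\le r$, there is an $h$ with $\rank_h(\bigoplus_{i=1}^p \langle n_1^i,\dots,n_k^i\rangle)\le r$. Taking the $N$-th tensor power and using $\rank_{hN}(t^{\otimes N})\le \rank_h(t)^N$, the direct sum power expands (by distributivity of $\otimes$ over $\oplus$) into a sum over tuples $\mathbf{i}\in[p]^N$ of tensors $\langle \prod_a n_1^{i_a}, \prod_a n_2^{i_a}, \prod_a n_3^{i_a},\dots\rangle$, each a \emph{product} of the chosen blocks. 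Grouping tuples by their type (how many times each $i\in[p]$ is used), there are at most $\binom{N+p-1}{p-1}$ distinct tensor types, and the most frequent type occurs at least $\binom{N}{\lambda_1 N,\dots,\lambda_p N}/\binom{N+p-1}{p-1}$ times for the maximizing frequency vector $(\lambda_1,\dots,\lambda_p)$.

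Next I would apply the Schönhage trick for $k$ parties: a disjoint sum $\bigoplus_{j=1}^q \langle a,b,\dots\rangle$ of $q$ copies of the \emph{same} matrix multiplication tensor (all blocks of equal format) degenerates into the single tensor $\langle a, b, \dots q\dots\rangle$ — more precisely one pads one of the legs by a factor $q$ and uses a generic-points / interpolation argument (the same one behind Schönhage's original partial matrix multiplication result) to realize the $q$ disjoint blocks inside a single matrix multiplication of enlarged size, with only a polynomial-in-degree loss when passing back from border rank to rank. Restricting the $N$-th power to just the copies of the single most popular type, of which there are $q \ge \binom{N}{\lambda_1 N,\dots}/\binom{N+p-1}{p-1}$, gives a border-rank upper bound $r^N$ on a disjoint sum of $q$ equal-format matrix multiplication tensors, hence (via Schönhage) a border-rank bound on one unbalanced $\langle n_1', n_2',\dots\rangle$ whose volume is $q \cdot \prod_{i}\bigl(\prod_j n_j^i\bigr)^{\lambda_i N}$. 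Feeding this into \cref{symm} yields
\[
\omega_k \le k \cdot \frac{\log r^N}{\log\bigl(q\cdot\prod_i(\prod_j n_j^i)^{\lambda_i N}\bigr)}.
\]

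Finally I would let $N\to\infty$. By Stirling, $\tfrac{1}{N}\log\binom{N}{\lambda_1 N,\dots,\lambda_p N}\to H(\lambda)$ (the entropy) while $\tfrac1N\log\binom{N+p-1}{p-1}\to 0$, so the denominator behaves like $N\bigl(H(\lambda)+\sum_i \lambda_i\log\prod_j n_j^i\bigr)$. Optimizing over the choice of frequency vector $\lambda$ — which is legitimate since $q$ was the \emph{maximum} over types, so we may as well pick $\lambda$ afterwards — the best bound is obtained exactly when $\lambda_i \propto \bigl(\prod_j n_j^i\bigr)^{\tau}$ with $\tau$ the solution of $\sum_{i=1}^p(\prod_j n_j^i)^\tau = r$; a short computation (the standard Schönhage balancing) shows that for this $\lambda$ the quantity $H(\lambda) + \sum_i\lambda_i\log\prod_j n_j^i$ equals $\tfrac1\tau\log r$, whence $\omega_k \le k\tau$. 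I expect the main obstacle to be the $k$-party version of the Schönhage trick itself: one must check that the interpolation/substitution argument that collapses many disjoint equal-format matrix multiplications into one larger matrix multiplication still works verbatim for $k$ legs (it does, since it only touches one leg at a time and the cyclic structure of $\langle n_1,\dots,n_k\rangle$ is preserved under the tensor products and restrictions used), and that the degree bound $h$ stays under control so that the passage $\borderrank\leadsto\rank$ only costs a polynomial factor absorbed in the limit — both of which are routine but need to be spelled out, exactly as in the $k=3$ treatment in \cite{blaser2013fast}.
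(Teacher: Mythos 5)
There is a genuine gap, and it is quantitative rather than cosmetic. Your mechanism for exploiting the multiplicity of a type is to \emph{merge} the $q$ equal-format summands into one larger matrix multiplication, so that the multiplicity enters the final bound only as an extra factor $q$ in the \emph{volume} (the denominator $\log\bigl(q\cdot\prod_i(\prod_j n_j^i)^{\lambda_i N}\bigr)$). Writing $V_i=\prod_j n_j^i$, the best this can ever give is
\[
\omega_k \;\le\; k\,\frac{\log r}{\,H(\lambda)+\sum_i\lambda_i\log V_i\,}\;\ge\; k\,\frac{\log r}{\log\sum_i V_i},
\]
since $\max_\lambda\bigl(H(\lambda)+\sum_i\lambda_i\log V_i\bigr)=\log\sum_iV_i$, attained at $\lambda_i=V_i/\sum_jV_j$. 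Your claimed balancing identity --- that for $\lambda_i\propto V_i^\tau$ one gets $H(\lambda)+\sum_i\lambda_i\log V_i=\tfrac1\tau\log r$ --- is false except in degenerate cases (test $p=2$, $V_1=V_2$: the left side is $\log 2+\log V$, the right side is $\tfrac{\log 2}{\tau}+\log V$). Since $(\sum_iV_i)^\tau<\sum_iV_i^\tau=r$ whenever $\tau<1$ and $p\ge2$, the exponent $\log r/\log\sum_iV_i$ is strictly larger than $\tau$ in exactly the regime the theorem is meant for, so no choice of $\lambda$ within your framework reaches $k\tau$. The missing idea is the one the paper uses: \cref{indlem} and \cref{corind}, Schönhage's inductive trick showing that $\rank(t^{\oplus a})\le b$ implies $\rank\bigl((t^{\otimes s})^{\oplus a}\bigr)\le\ceil{b/a}^s a$. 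There the multiplicity $a=\binom{s}{\sigma_1,\ldots,\sigma_p}$ \emph{divides the rank budget} (numerator $\log\ceil{b/a}$) instead of multiplying the volume, and a comparison of $\tfrac{\log(b/a)}{\log v}$ with $\tfrac{\log b}{\log a+\log v}$ shows the former is strictly better precisely when $b\le av$, i.e.\ when $\tau\le1$; choosing $\sigma$ to maximize $\binom{s}{\sigma}\prod_j n_j^\tau$ (not the most frequent type) then yields $k\tau+o(1)$.

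Two smaller points. First, your ``Schönhage trick'' is stated with the restriction going the wrong way: a direct sum of $q$ copies of $\langle m_1,\ldots,m_k\rangle$ cannot be realized \emph{inside} $\langle m_1,qm_2,m_3,\ldots,m_k\rangle$ (compare flattening ranks on the first leg); the correct and useful direction is $\langle m_1,qm_2,m_3,\ldots,m_k\rangle\le\bigoplus_{q}\langle m_1,\ldots,m_k\rangle$, and for this no interpolation or generic-points argument is needed --- a plain restriction suffices (embed the shared legs diagonally and collapse the copy index on the remaining legs), also for $k$ parties. Second, the count of summands of a fixed type $\lambda$ is exactly the multinomial coefficient, not the multinomial divided by the number of types; this does not affect the asymptotics but should be stated correctly. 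Even with these repairs, however, the route as proposed proves only $\omega_k\le k\log r/\log\sum_i\prod_jn_j^i$, not the stated $\omega_k\le k\tau$.
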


We follow the proof of \cite{blaser2013fast}. We first prove two lemmas. For tensors $s,t\in\CC^{m_1}\otimes \cdots \otimes \CC^{m_k}$, let $s\leq t$ denote the existence of an SLOCC transformation mapping $t$ to $s$. Let $a,b \in \NN+1$.

\begin{lemma}\label{indlem}
Let $t$ be a tensor such that $\rank( t^{\oplus a} ) \leq b$. Then for all $s$, $\rank((t^{\otimes s})^{\oplus a}) \leq \ceil{b/a}^s a$.
\end{lemma}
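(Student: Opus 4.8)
The plan is to prove \cref{indlem} by induction on $s$, using the subadditivity and submultiplicativity of rank together with the key trick of "completing" the $a$ disjoint copies to a multiple of $a$ so that we may apply the hypothesis $\rank(t^{\oplus a}) \leq b$ in blocks. First I would record the base case $s=1$: the assumption $\rank(t^{\oplus a}) \leq b \leq \ceil{b/a} a$ is exactly the claim for $s=1$, since $\ceil{b/a}\geq b/a$.

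For the inductive step, suppose $\rank((t^{\otimes s})^{\oplus a}) \leq \ceil{b/a}^s a =: c$. I want to bound $\rank((t^{\otimes s+1})^{\oplus a})$. Write $t^{\otimes s+1} = t \otimes t^{\otimes s}$, so $(t^{\otimes s+1})^{\oplus a} = t \otimes (t^{\otimes s})^{\oplus a}$ up to reindexing the direct-sum coordinate into the last tensor factor; more precisely $(t\otimes t^{\otimes s})^{\oplus a}$ is a restriction of $t^{\oplus a} \otimes (t^{\otimes s})^{\oplus a}$ obtained by selecting a "diagonal" copy of the outer direct sums — but the cleaner route is: a rank decomposition of $(t^{\otimes s})^{\oplus a}$ of size $c$ writes $(t^{\otimes s})^{\oplus a}$ as a sum of $c$ rank-one tensors, hence $t \otimes (t^{\otimes s})^{\oplus a}$ has rank at most $\rank(t^{\oplus c})$ when we note $t \otimes (t^{\otimes s})^{\oplus a} = (t^{\otimes s+1})^{\oplus a}$ and that this tensor is obtained from $t^{\oplus c}$ by a local linear map (grouping the $c$ copies of $t$ into $a$ blocks along the appropriate leg). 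So it remains to bound $\rank(t^{\oplus c})$. Group the $c$ copies into $\ceil{c/a}$ groups of at most $a$ copies each; each group has rank at most $\rank(t^{\oplus a}) \leq b$ by hypothesis (padding a group of fewer than $a$ copies with zero tensors only decreases rank), so $\rank(t^{\oplus c}) \leq \ceil{c/a} b$. Finally $\ceil{c/a} = \ceil{\ceil{b/a}^s a / a} = \ceil{b/a}^s$, giving $\rank(t^{\oplus c}) \leq \ceil{b/a}^s b \leq \ceil{b/a}^s \ceil{b/a} a = \ceil{b/a}^{s+1} a$, completing the induction.

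The one point requiring care — and the main obstacle — is the bookkeeping in the claim "$(t^{\otimes s+1})^{\oplus a}$ is a local restriction of $t^{\oplus c}$ where $c = \rank((t^{\otimes s})^{\oplus a})$". The clean justification is: if $u = \sum_{i=1}^c u_i^1 \otimes \cdots \otimes u_i^k$ is a rank-$c$ decomposition of $u := (t^{\otimes s})^{\oplus a}$ (living in spaces $U_1 \otimes \cdots \otimes U_k$), then $t \otimes u = \sum_{i=1}^c t \otimes (u_i^1 \otimes \cdots \otimes u_i^k)$; each summand $t \otimes (u_i^1 \otimes \cdots \otimes u_i^k)$ is a copy of $t$ with each leg tensored by a fixed vector, so $t \otimes u$ is a restriction of $t^{\oplus c}$ via the local maps sending the $j$th leg of the $i$th copy of $t$ to $(\text{leg } j \text{ of } t) \otimes u_i^j$. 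Hence $\rank(t \otimes u) \leq \rank(t^{\oplus c})$, and since $t \otimes (t^{\otimes s})^{\oplus a} \cong (t^{\otimes s+1})^{\oplus a}$ (the outer direct sum of $a$ copies commutes with tensoring by $t$), we get $\rank((t^{\otimes s+1})^{\oplus a}) \leq \rank(t^{\oplus c})$. Everything else is the elementary arithmetic of ceilings. I would present the induction concisely, state the "restriction" step as a one-line observation about tensor products of direct sums, and leave the ceiling manipulation to the reader.
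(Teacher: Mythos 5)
Your proof is correct, and it is essentially the paper's argument: an induction on $s$ whose inductive step combines (i) replacing a tensor of known rank by a unit tensor, i.e.\ $u\le \GHZ_{\rank(u)}$ together with $\GHZ_r\otimes w=w^{\oplus r}$, and (ii) subadditivity of rank over blocks of $a$ copies. The only difference is the order in which the two available bounds are used. The paper replaces the factor $t^{\oplus a}$ in $(t^{\otimes s+1})^{\oplus a}=t^{\oplus a}\otimes t^{\otimes s}$ by $\GHZ_b$, obtaining $(t^{\otimes s})^{\oplus b}$, and then applies the induction hypothesis to $\ceil{b/a}$ blocks of $a$ copies; you instead replace $(t^{\otimes s})^{\oplus a}$ by, in effect, $\GHZ_c$ with $c=\ceil{b/a}^{s}a$, obtaining $t^{\oplus c}$, and then apply the hypothesis $\rank(t^{\oplus a})\le b$ to $\ceil{b/a}^{s}$ blocks of $a$ copies. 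Both arrangements yield exactly $\ceil{b/a}^{s+1}a$; the paper's version is a one-line display because the fact you verify by hand --- that a rank-$c$ decomposition of $u$ makes $t\otimes u$ a restriction of $t^{\oplus c}$ --- is exactly the packaged statement $u\le\GHZ_c$, $\GHZ_c\otimes t=t^{\oplus c}$, so you may cite it rather than reprove it.
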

\begin{proof}
We prove the lemma by induction over $s$. The base case $s=1$ follows from the assumption. For the induction step, we have
\[
(t^{\otimes s+1})^{\oplus a} = t^{\oplus a} \otimes t^{\otimes s} \leq \GHZ_b \otimes t^{\otimes s} = (t^{\otimes s})^{\oplus b},
\]
and thus, by the induction hypothesis,
\[
\rank((t^{\otimes s+1})^{\oplus a}) \leq \rank( (t^{\otimes s})^{\oplus b}) \leq \rank((t^{\otimes s})^{\oplus (\ceil {b/a} a)}) \leq \ceil{\tfrac{b}{a}} \ceil{\tfrac{b}{a}}^s a \leq \ceil{\tfrac{b}{a}}^{s+1} a,
\]
proving the lemma.
\end{proof}

\begin{lemma}\label{corind}
If $\rank(\langle n_1,n_2,\ldots,n_k \rangle ^{\oplus a}) \leq b$, then $\omega \leq k \log_{n_1\cdots n_k} \ceil{b/a}$.
\end{lemma}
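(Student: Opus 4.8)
The plan is to reduce \cref{corind} to \cref{symm} by amplifying the rank bound on the direct sum $\langle n_1,\ldots,n_k\rangle^{\oplus a}$ into a border rank bound on a single (large) iterated matrix multiplication tensor. First I would set $N = n_1 n_2\cdots n_k$ and apply \cref{indlem} with $t = \langle n_1,\ldots,n_k\rangle$, $b$ the given rank bound, to obtain, for every $s$, the bound $\rank\bigl((\langle n_1,\ldots,n_k\rangle^{\otimes s})^{\oplus a}\bigr) \leq \ceil{b/a}^s a$. Since iterated matrix multiplication tensors are multiplicative under the tensor product up to a cyclic relabelling of parties, $\langle n_1,\ldots,n_k\rangle^{\otimes s}$ is (after grouping the local spaces of each party) the tensor $\langle n_1^s, n_2^s, \ldots, n_k^s\rangle$. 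So we have $\rank\bigl(\langle n_1^s,\ldots,n_k^s\rangle^{\oplus a}\bigr) \leq \ceil{b/a}^s a$, i.e.\ a rank bound on $a$ disjoint copies of an unbalanced matrix multiplication tensor of total product $N^s$.

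Next I would invoke Schönhage's insight that a rank bound on a direct sum of few copies can be converted into a border rank bound on a single copy, by ``letting the copies collide'': there is a standard degeneration showing that $\borderrank(\langle m_1,\ldots,m_k\rangle) \leq \rank(\langle m_1,\ldots,m_k\rangle^{\oplus a}) - a + 1$ when the $a$ copies are identical (this is precisely the $p$-to-single-copy step underlying the $\tau$-theorem, specialized to all summands equal). Applying this with $m_j = n_j^s$ gives $\borderrank(\langle n_1^s,\ldots,n_k^s\rangle) \leq \ceil{b/a}^s a - a + 1 \leq \ceil{b/a}^s a$. Now \cref{symm} applies directly: since $\langle n_1^s,\ldots,n_k^s\rangle$ has local product $N^s$, we get $\omega_k \leq k\log_{N^s}(\ceil{b/a}^s a) = k\bigl(\log_N \ceil{b/a} + \tfrac{1}{s}\log_{N}a\bigr)$. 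Letting $s\to\infty$ kills the error term and yields $\omega_k \leq k\log_N \ceil{b/a} = k\log_{n_1\cdots n_k}\ceil{b/a}$, as claimed. (Note the statement as written says $\omega$ rather than $\omega_k$; I would write $\omega_k$ to match the $k$-party setting, matching the forthcoming use in the $\tau$-theorem.)

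The main obstacle I anticipate is making the Schönhage ``collision'' degeneration $\borderrank(t^{\oplus a}) \geq \borderrank(t) + a - 1$ for identical summands fully rigorous in the $k$-party setting — this is the only genuinely nontrivial ingredient, since \cref{indlem} and \cref{symm} are already in hand. The cleanest route is probably to avoid it entirely: instead of passing through border rank of a single copy, I would directly feed the bound $\rank\bigl(\langle n_1^s,\ldots,n_k^s\rangle^{\oplus a}\bigr) \leq \ceil{b/a}^s a$ into the argument of \cref{symm}, observing that $a$ disjoint copies of $\langle N^s,\ldots,N^s\rangle$ degenerate into $\langle aN^s,\ldots,N^s\rangle$ (or use that the direct sum embeds block-diagonally), so that $\rank(\langle aN^s,\ldots,N^s\rangle)$ is controlled; then take further tensor powers and $s\to\infty$ exactly as in \cref{symm}, where the factor $a$ contributes only $\log_{(aN^s)}(a)\to 0$. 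Either way the asymptotic cleanup — taking $s\to\infty$ so that the multiplicative constants and the stray factor $a$ become negligible — is routine given \cref{symm}, and I would not belabour it.
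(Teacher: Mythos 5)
Your proposal is correct, and its skeleton (amplify via \cref{indlem}, pass to a single unbalanced tensor, feed into \cref{symm}, let $s\to\infty$ so the stray factor $a$ dies) is exactly the paper's; but you make the middle step much harder than it is. The paper converts the direct-sum bound into a single-copy bound by the trivial monotonicity $\rank(\langle n_1^s,\ldots,n_k^s\rangle) \leq \rank\bigl(\langle n_1^s,\ldots,n_k^s\rangle^{\oplus a}\bigr) \leq \ceil{b/a}^s a$ --- a single copy is a restriction of the direct sum, so you may simply discard the other $a-1$ copies. No Schönhage-style ``collision'' degeneration $\borderrank(t) \leq \rank(t^{\oplus a}) - a + 1$ is needed, and indeed it would buy nothing: the improvement by $a-1$ is absorbed anyway when you divide by $s\log(n_1\cdots n_k)$ and send $s\to\infty$, exactly as in your own error analysis. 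So the one ingredient you flag as ``genuinely nontrivial'' is a red herring rather than a gap. Your fallback route --- combining the $a$ copies into the unbalanced tensor $\langle a n_1^s, n_2^s,\ldots,n_k^s\rangle$ via the block-diagonal restriction and applying \cref{symm} with local product $aN^s$ --- is also valid and self-contained, but again strictly more work than dropping the copies. Finally, you are right that the conclusion should read $\omega_k$ rather than $\omega$; the paper's own proof concludes $\omega_k \leq k\bigl(s\log\ceil{b/a}+\log a\bigr)/\bigl(s\log(n_1\cdots n_k)\bigr) \to k\log_{n_1\cdots n_k}\ceil{b/a}$, matching your limit computation.
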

\begin{proof}
The inequality $\rank(\langle n_1,n_2,\ldots,n_k \rangle ^{\oplus a}) \leq b$ implies by \cref{indlem} the inequality $\rank(\langle n_1^s,n_2^s,\ldots,n_k^s\rangle) \leq \rank(\langle n_1^s,n_2^s,\ldots,n_k^s\rangle^{\oplus a}) \leq \ceil{b/a}^s a$ which by \cref{symm} yields
\[
\omega_k \leq k\frac{s \log{\ceil{\tfrac{b}{a}}} + \log(a)}{s \log(n_1\cdots n_k)},
\]
which goes to $k \log{\ceil{b/a}} /  \log(n_1\cdots n_k)$ when $s$ goes to infinity.
\end{proof}

\begin{proof}[\bf Proof of \cref{tau}]
We assume $\borderrank\bigl(\bigoplus_{i=1}^p \langle n_1^i,n_2^i,\ldots,n_k^i\rangle\bigr) \leq r$. This implies that there is an $h\in \NN$ such that $\rank_h\bigl(\bigoplus_{i=1}^p \langle n_1^i,n_2^i,\ldots,n_k^i \rangle\bigr) \leq r$. Taking the $s$th tensor power gives $\rank_{hs}\bigl((\bigoplus_{i=1}^p \langle n_1^i,n_2^i,\ldots,n_k^i \rangle)^{\otimes s}\bigr) \leq r^s$. We expand the tensor power to get
\[
\rank_{hs}\Bigl( \bigoplus_{\sigma} \Bigl(\bigotimes_{i=1}^p \bigl\langle (n_1^i)^{\sigma_i}, (n_2^i)^{\sigma_i}, \ldots, (n_k^i)^{\sigma_i}  \bigr\rangle \Bigr)^{\oplus \binom{s}{\sigma_1,\ldots,\sigma_p}} \Bigr) \leq r^s,
\]
where the first direct sum is over all $p$-tuples $\sigma$ of nonnegative integers with sum $s$. We can also write this inequality as
\[
\rank_{hs}\Bigl( \bigoplus_{\sigma} \bigl\langle \textstyle\prod_i (n_1^i)^{\sigma_i}, \ldots, \textstyle\prod_i (n_k^i)^{\sigma_i}  \bigr\rangle \Bigr)^{\oplus \binom{s}{\sigma_1,\ldots,\sigma_p}} \Bigr) \leq r^s.
\]
There exists a number $c_{hs}$ depending polynomially on $h$ and $s$ such that
\[
\rank\Bigl( \bigoplus_{\sigma}\, \bigl\langle \textstyle\prod_i (n_1^i)^{\sigma_i}, \ldots, \prod_i (n_k^i)^{\sigma_i}  \bigr\rangle ^{\oplus \binom{s}{\sigma_1,\ldots,\sigma_p}} \Bigr) \leq c_{hs}\, r^s.
\]
Define $\tau$ by $\sum_{i=1}^p \bigl(\prod_{j=1}^k n_j^i\bigr)^\tau = r$. Then $\sum_{\sigma} \binom{s}{\sigma_1,\ldots,\sigma_p} \bigl(\prod_i(n_1^i)^{\sigma_i} \cdots \prod_i(n_k^i)^{\sigma_i}\bigr)^{\tau} = r^s$. In this sum, consider the maximum summand and fix the corresponding $\sigma$. Define the numbers $n_j \coloneqq \prod_i(n_j^i)^{\sigma_j}$. Let $a \coloneqq \binom{s}{\sigma_1,\ldots,\sigma_p}$ and $b \coloneqq r^s c_{hs}$. We apply \cref{corind} to the inequality $\rank(\langle n_1',\ldots,n_k'\rangle^{\oplus a}) \leq b$ to obtain
\[
\omega_k \leq k \tau + \frac{(p-1)\log(s+1) + \log(c_{hs})}{\log(n_1\cdots n_k)},
\]
which goes to $k \tau$ when $s$ goes to infinity. (See \cite{blaser2013fast} for more details.)
\end{proof}

\paragraph{Strassen's laser method.} We will now use Strassen's laser method to prove the main result of this section.
\begin{theorem}\label{nontrivomega}
For any odd $k$ we have $\omega_k < k$.
\end{theorem}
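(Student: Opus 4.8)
The plan is to apply Strassen's laser method, i.e.\ to find one "base tensor" $T$ that is simultaneously (a) of low border rank and (b) "laser-friendly" — meaning some restriction of a high tensor power $T^{\otimes N}$ degenerates to a large direct sum of iterated matrix multiplication tensors — so that \cref{tau} gives $\omega_k \le k\tau$ with $\tau<1$. Because $\IMM^k_n$ is cyclically symmetric in its $k$ parties, the natural candidate is a cyclically symmetric generalisation of the Coppersmith--Winograd tensor: set
\[
T \;\coloneqq\; \sum_{i=1}^q \bigl(\ket{0}\!\ket{0}\cdots\ket{i}\cdots\ket{0} \;+\;\text{(all $k$ cyclic placements of the index $i$)}\bigr)\;+\;\ket{0}\!\ket{0}\cdots\ket{0}
\]
living in $(\CC^{q+1})^{\otimes k}$ (the $k=3$ case is exactly the small CW tensor $\mathrm{cw}_q$). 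First I would verify that $\borderrank(T)\le q+2$, generalising the standard $\varepsilon$-degeneration argument for $\mathrm{cw}_q$; this is a short explicit computation with vectors in $\CC[\varepsilon]^{q+1}$ and carries over essentially verbatim since the construction is symmetric under cyclic shifts.

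Next I would grade $T$ by giving each tensor leg the grading $\{0\}\mapsto 0$, $\{1,\dots,q\}\mapsto 1$, and decompose $T$ into its graded pieces $T_{e_1,\dots,e_k}$ indexed by the total degree in each leg. The key structural observation is that the nonzero blocks are exactly those where the degree vector is a cyclic shift of $(2,1,0,\dots,0)$ (from the two "spread" indices matching) together with $(1,1,0,\dots,0)$-type pieces and the all-zero block; each such block, after the matching constraint is resolved, is itself (isomorphic to) a small unbalanced $\langle n_1,\dots,n_k\rangle$. Taking the $N$th tensor power, $T^{\otimes N}$ decomposes along degree sequences in $(\{0,1,2\})^{kN}$; by a probabilistic/combinatorial argument (choosing the degree sequence to have the right marginal frequencies) one restricts — zeroing out variables along all but one carefully chosen "combinatorial matching" — to a large direct sum $\bigoplus \langle n_1^i,\dots,n_k^i\rangle$ whose count and dimensions are controlled by a multinomial coefficient. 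Feeding the border-rank bound $\borderrank(T^{\otimes N})\le (q+2)^N$ and this direct-sum lower bound into \cref{tau} yields $\omega_k\le k\tau$ where $\tau$ is defined by the corresponding value equation; optimising over $q$ and the degree distribution gives $\tau<1$, hence $\omega_k<k$.

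The main obstacle is the combinatorial heart of the laser method: showing that after grading, a single asymptotically-dominant restriction of $T^{\otimes N}$ really does degenerate to an "independent" direct sum of matrix-multiplication tensors of the claimed size — i.e.\ that the overlapping matching constraints coming from different cyclic placements can be simultaneously satisfied in a combinatorially large, collision-free way. For $k=3$ this is the classical Coppersmith--Winograd analysis; the work here is to check that the odd-cycle structure does not obstruct the simultaneous matching (indeed one expects evenness to be exactly what would cause a degeneracy, matching \cref{evencase}), and to carry out the constrained optimisation producing an explicit $\tau<1$. I would isolate the matching/degeneration step as a lemma, prove it by the usual Salem--Spencer-set-free counting argument adapted to the cyclic layout, and then do the optimisation; a crude choice of $q$ (e.g.\ $q$ a small constant) already suffices to beat the trivial bound $k$, so one need not chase the optimal constant to establish the strict inequality.
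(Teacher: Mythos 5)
There is a genuine gap, in fact two. First, your base tensor and its border rank: as written, each term places a single index $i$ in one leg and $0$ elsewhere, so for $k=3$ your $T$ is \emph{not} $\mathrm{cw}_q$ (it is a sum of four rank-one-ish slices of the form $u\otimes e_0\otimes e_0$ and is useless for the laser method). Reading it charitably as ``$i$ in two cyclically adjacent legs'', the claim $\borderrank(T)\le q+2$ does \emph{not} carry over verbatim: the standard degeneration $\sum_i(e_0+\varepsilon e_i)^{\otimes k}$ produces, at order $\varepsilon^2$, \emph{all} $\binom{k}{2}$ placements of the pair $(i,i)$, not only the $k$ cyclic ones, and the extra blocks are EPR pairs between non-adjacent parties, which are not of the form $\langle n_1,\ldots,n_k\rangle$ (an unbalanced IMM tensor only produces EPR pairs along edges of the cycle). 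So you must either find a genuinely new degeneration for the cyclic-pairs tensor or redo the block analysis with non-IMM blocks present; neither is ``a short explicit computation''. Relatedly, your description of the nonzero blocks (degree vectors that are cyclic shifts of $(2,1,0,\ldots,0)$) is incorrect for the tensor you wrote down: with the $0/1$ grading each term $\ket{ii0\cdots0}$ sits in the block $(1,1,0,\ldots,0)$.

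Second, and more fundamentally, the step you defer as ``the main obstacle'' --- the matching/zeroing-out argument and the optimisation showing $\tau<1$ --- is precisely where the parity of $k$ must enter, because the statement is \emph{false} for even $k$ (\cref{evencase} gives $\omega_k=k$). Your sketch never identifies the mechanism by which oddness is used, and the closing assertion that ``a crude choice of $q$ already suffices to beat the trivial bound'' is exactly the inequality to be proven, not a consequence of anything you established. For comparison, the paper avoids all Coppersmith--Winograd combinatorics: it takes the (non-symmetric) Strassen tensor $\Str_q^k=\sum_i\ket{ii0\cdots0}+\ket{0ii0\cdots0}$ with $\borderrank\le q+1$, tensors together its $k$ cyclic shifts, and observes that the outer structure is a cycle of EPR pairs along the distance-$2$ edges $\{1,3\},\{2,4\},\ldots$, which form a single $k$-cycle, i.e.\ equal $\langle 2,2,\ldots,2\rangle$, \emph{precisely because $k$ is odd}; the trivial subrank bound $\GHZ_2^k\le\langle2,\ldots,2\rangle$ then yields $2^s$ disjoint inner IMM blocks of total volume $q^{ks}$, and \cref{tau} gives $\omega_k\le\log_q\bigl((q+1)^k/2\bigr)<k$. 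There the parity enters through an explicit structural identity one can check; in your plan it is buried inside an unproven combinatorial optimisation, so as it stands the proposal does not establish the theorem.
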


We will give a proof for the case $k=5$, the other cases being similar. Define the $5$-tensor $\Str_q^5 = \sum_{i=1}^q \ket{ii000} + \ket{0ii00}$ in $\CC^{q+1} \otimes \CC^q \otimes \CC^{q+1} \otimes  \CC \otimes \CC$.
\begin{proposition}
$\borderrank(\Str^5_q) \leq q+1$.
\end{proposition}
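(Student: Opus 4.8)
The plan is to exhibit an explicit border-rank decomposition using $q+1$ rank-one terms, which simply recovers Strassen's classical construction. First I would observe that the fourth and fifth tensor legs of $\Str_q^5$ are one-dimensional: every index appearing in those slots is $0$. Hence they contribute only a spectator factor $\ket{0}\otimes\ket{0}$ and may be stripped off, leaving the $3$-tensor $\widetilde{\Str}_q \coloneqq \sum_{i=1}^q \ket{i}\!\ket{i}\!\ket{0} + \ket{0}\!\ket{i}\!\ket{i}$ on $\CC^{q+1}\otimes\CC^q\otimes\CC^{q+1}$, which is precisely Strassen's tensor. It therefore suffices to prove $\borderrank(\widetilde{\Str}_q) \le q+1$ and then tensor the trivial legs back on.

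For the bound I would work over $\CC[\varepsilon]$ and use the identity $\borderrank(t) = \min_h \rank_h(t)$ recalled just above the proposition; concretely I claim $\rank_1(\widetilde{\Str}_q) \le q+1$. Consider the $q$ rank-one tensors $(\ket{0}+\varepsilon\ket{i}) \otimes \ket{i} \otimes (\ket{0}+\varepsilon\ket{i})$ for $i=1,\ldots,q$, together with the single rank-one tensor $-\,\ket{0}\otimes\bigl(\sum_{i=1}^q\ket{i}\bigr)\otimes\ket{0}$. Expanding the $i$-th of the first $q$ tensors gives $\ket{0}\!\ket{i}\!\ket{0} + \varepsilon\bigl(\ket{i}\!\ket{i}\!\ket{0}+\ket{0}\!\ket{i}\!\ket{i}\bigr) + \varepsilon^2\,\ket{i}\!\ket{i}\!\ket{i}$; summing over $i$, the $\varepsilon^0$ part is $\ket{0}\otimes\bigl(\sum_i\ket{i}\bigr)\otimes\ket{0}$, which the extra term cancels, so what remains is $\varepsilon\,\widetilde{\Str}_q + \varepsilon^2\sum_i\ket{i}\!\ket{i}\!\ket{i} = \varepsilon\,\widetilde{\Str}_q + \Oh(\varepsilon^2)$. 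This is exactly the inequality $\rank_1(\widetilde{\Str}_q) \le q+1$, hence $\borderrank(\widetilde{\Str}_q)\le q+1$. Tensoring each of the $q+1$ rank-one terms with $\ket{0}\otimes\ket{0}$ on the last two legs yields a degeneration of $\Str_q^5$ using $q+1$ simple tensors, giving $\borderrank(\Str_q^5)\le q+1$.

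I do not expect any genuine obstacle: the only thing to verify is the elementary bookkeeping in the $\varepsilon$-expansion --- that the constant-in-$\varepsilon$ terms sum to $\ket{0}\otimes\bigl(\sum_i\ket{i}\bigr)\otimes\ket{0}$ and that the coefficient of $\varepsilon$ is exactly $\widetilde{\Str}_q$ --- which is a one-line computation. If one prefers to avoid the $\rank_h$ notation, the identical content reads as the limit $\Str_q^5 = \lim_{\varepsilon\to0}\varepsilon^{-1}\bigl[\sum_{i=1}^q(\ket{0}+\varepsilon\ket{i})\otimes\ket{i}\otimes(\ket{0}+\varepsilon\ket{i})\otimes\ket{0}\otimes\ket{0} - \ket{0}\otimes(\sum_{i=1}^q\ket{i})\otimes\ket{0}\otimes\ket{0}\otimes\ket{0}\bigr]$, where each bracketed expression is visibly a sum of $q+1$ simple tensors.
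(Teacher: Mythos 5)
Your proposal is correct and uses essentially the same construction as the paper: the rank-one terms $(\ket{0}+\varepsilon\ket{i})\otimes\ket{i}\otimes(\ket{0}+\varepsilon\ket{i})$ (tensored with $\ket{0}\otimes\ket{0}$ on the trivial legs) together with the correction term $-\ket{0}\otimes\bigl(\sum_i\ket{i}\bigr)\otimes\ket{0}\otimes\ket{0}\otimes\ket{0}$, giving $\varepsilon\,\Str_q^5+\Oh(\varepsilon^2)$ from $q+1$ simple tensors. Stripping the one-dimensional legs first is a cosmetic difference only.
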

\begin{proof} Expanding $\sum_{i=1}^q (\ket{0} + \varepsilon\ket{i})\ket{i}(\ket{0} + \varepsilon\ket{i})\ket{0}\!\ket{0}$ gives 
\[
\sum_{i=1}^q \ket{0i000} + \varepsilon \sum_{i=1}^q \ket{ii000}+\ket{0ii00} + \Oh(\varepsilon^2).
\]
Subtracting $\ket{0}\!\bigl(\sum_{i=1}^q\ket{i}\bigr)\! \ket{000}$ yields $\varepsilon\,\Str_q^5 + \Oh(\varepsilon^2)$.
\end{proof}

Define the tensor $\langle n_1,n_2,n_3,n_4,n_5 \rangle$ to be 
\[
\sum_{x\in[n_1]\times\cdots \times[n_5]} \ket{x_1x_2}\ket{x_2x_3}\ket{x_3x_4}\ket{x_4x_5}\ket{x_5x_1}.
\]
So $\IMM_n^5 = \langle n,n,n,n,n \rangle$.

\begin{proposition}\label{subrank}
$\GHZ_2^{5} \leq \langle 2,2,2,2,2 \rangle$.
\end{proposition}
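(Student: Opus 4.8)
The plan is to prove the statement by exhibiting an honest restriction of $\langle 2,2,2,2,2\rangle$ onto $\GHZ_2^5$, so that no border/limiting process is involved and we immediately get an SLOCC transformation in the required sense. Recall that in $\langle 2,2,2,2,2\rangle=\sum_{x\in[2]^5}\ket{x_1x_2}\ket{x_2x_3}\ket{x_3x_4}\ket{x_4x_5}\ket{x_5x_1}$ party $i$ owns the $2\times 2$ ``matrix slot'' $\ket{x_ix_{i+1}}\in\CC^2\otimes\CC^2$, with indices read cyclically, so party $5$ owns $\ket{x_5x_1}$. For every $i$ I would take the local linear map
\[
A_i\colon\CC^2\otimes\CC^2\longrightarrow\CC^2,\qquad A_i\coloneqq\ketbra{1}{11}+\ketbra{2}{22},
\]
that is, the projection of the matrix slot onto its diagonal followed by the identification $\ket{jj}\mapsto\ket{j}$.

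Applying $A_1\otimes\cdots\otimes A_5$ to $\langle 2,2,2,2,2\rangle$ multiplies the $x$-th summand by $\prod_{i=1}^5\delta_{x_i,x_{i+1}}$ and turns it into $\ket{x_1}\ket{x_2}\ket{x_3}\ket{x_4}\ket{x_5}$. The five Kronecker deltas impose $x_1=x_2=x_3=x_4=x_5$, which over $[2]$ leaves exactly the two words $11111$ and $22222$, so the image is $\ket{1}^{\otimes5}+\ket{2}^{\otimes5}=\GHZ_2^5$. Conceptually this is nothing more than the fact that the multilinear form $\IMM_2^5(M_1,\dots,M_5)=\tr(M_1\cdots M_5)$, restricted to diagonal matrices, equals $\sum_{j\in[2]}(M_1)_{jj}\cdots(M_5)_{jj}$, which is precisely the multilinear form of $\GHZ_2^5$.

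I do not expect any genuine obstacle here. The only points requiring (routine) care are the cyclic index bookkeeping — the second index of party $5$'s slot is $x_1$, not $x_6$, and it is exactly this wrap-around that makes the five deltas close up into a single chain of equalities — and the fact that the restriction lands on $\GHZ_2^5$ exactly rather than on a scalar multiple; both are visible directly from the displayed computation. The same maps in fact exhibit $\GHZ_n^k\le\IMM_n^k$ for all $n,k$, but only the case $n=2$, $k=5$ is needed for the laser-method argument that follows.
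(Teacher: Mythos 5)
Your proof is correct and is essentially identical to the paper's: your map $A_i=\ketbra{1}{11}+\ketbra{2}{22}$ is exactly the paper's $\phi:\ket{ab}\mapsto\delta_{[a=b]}\ket{a}$ applied at each party, and the cyclic wrap-around argument is the same. You simply spell out the delta bookkeeping that the paper leaves implicit.
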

\begin{proof}
Let $\phi$ be the map $\ket{ab} \mapsto \delta_{[a=b]}\ket{a}$. Apply $\phi^{\otimes 5}$ to $\langle 2,2,2,2,2 \rangle$. This yields one copy of $\GHZ_2^{[5]}$.
\end{proof}
\begin{remark}
We mention that the subrank result of \cref{subrank} can by improved asymptotically in the sense that rate $\omega(\langle 2,2,2,2,2 \rangle, \GHZ^{5}) = 1/2$ \cite{vrana}. Using this fact in the proof of \cref{nontrivomega} gives the slightly better upper bound $\omega_k \leq \log_q((q+1)^k/4)$.
\end{remark}

For the proof of \cref{nontrivomega} we have to define the notion of the decomposition of the support of a tensor and the corresponding inner and outer structure of a tensor. Let $I_1,\ldots,I_k$ be finite sets. A \emph{decomposition} $\mathcal{D}$ of $I_1\times \cdots \times I_k$ is a collection of sets $I_i^j$ such that
\[
I_i = \bigsqcup_{j} I_i^j,
\]
meaning that for every $i$, $\cap_j I_i^j = \emptyset$ and $\cup_j I_i^j = I_i$. Let $t$ be a tensor in $\CC^{m_1} \otimes \cdots \otimes \CC^{m_k}$ and index the basis elements in this space by elements of $[m_1]\times \cdots \times [m_k]$. Let $\mathcal{D}$ be a decomposition of $[m_1]\times \cdots \times [m_k]$. We view $\mathcal{D}$ as a `cut' of $[m_1]\times \cdots \times [m_k]$ into smaller product sets and thus as a `cut' of $t$ into smaller tensors. We define $t|_{I_1^{j_1},I_2^{j_2},\ldots,I_k^{j_k}}$ to be the restriction of $t$ to the basis elements in $I_1^{j_1}\times I_2^{j_2}\times\ldots\times I_k^{j_k}$. These smaller tensors we think of as the `inner structure' of $t$.  We define the `outer structure' of $t$ with respect to $\mathcal{D}$ to be the tensor $t_{\mathcal{D}}$ indexed by sequences $(j_1,\ldots,j_k)$ such that $t_{\mathcal{D}}$ has a 1 at position $(j_1,\ldots,j_k)$ if $t$ restricted to $I_1^{j_1} \times \cdots \times I_k^{j_k}$ is not the zero tensor, and a 0 otherwise.

\begin{proof}[\bf Proof of \cref{nontrivomega}]
We will give a proof for the case $k=5$, the other cases being similar.
Define a block decomposition $\mathcal{D}$ of the support $I_1\times\cdots \times I_5$ of $\Str_q^5$ by
\begin{align*}
I_1 &= \{0\} \cup \{1,\ldots, q\}\\
I_2 &= \{1,\ldots, q\}\\
I_3 &= \{0\} \cup \{1,\ldots, q\}\\
I_4 &= \{0\}\\
I_5 &= \{0\}.
\end{align*}
We have the outer structure $(\Str_q^5)_{\mathcal{D}} = \ket{11000} + \ket{01100}\cong \ket{10100}+\ket{00000}$. Note that this is just an EPR pair between party 1 and 3. The inner structures are $\sum_{i=1}^q \ket{ii000}$ and $\sum_{i=1}^q \ket{0ii00}$, which are also known as $\langle 1,q,1,1,1 \rangle$ and $\langle 1,1,q,1,1 \rangle$. Let $\Cyc_5$ be the map $t\mapsto t \otimes \sigma t \otimes \sigma^2 t \otimes \sigma^3 t\otimes \sigma^4 t$ with $\sigma = (12345)$.
Let $\hat{\mathcal{D}} = \Cyc_5 \mathcal{D}$ be the naturally corresponding decomposition. Then
\begin{equation}\label{master}
\langle 2,2,2,2,2 \rangle^{\otimes s}=(\Cyc_5 \Str_q^5)_{\hat{\mathcal{D}}^{\otimes s}}^{\otimes s}\quad\textnormal{and} \quad \borderrank\bigl((\Cyc_5 \Str_q^5)^{\otimes s}\bigr) \leq (q+1)^{5s}.
\end{equation}
Note how the first statement relies on 5 being odd.

The inner structure of $(\Cyc_5 \Str_q^5)_{\hat{\mathcal{D}}^{\otimes s}}^{\otimes s}$ consists of tensors from $I\coloneqq \{\langle n_1,n_2,n_3,n_4,n_5\rangle \mid n_1\cdots n_5 = q^{5s}\}$.
 Combining equation~\eqref{master} with \cref{subrank} gives that there are $2^s$ elements $t_1,t_2,\ldots \in I$ such that
\[
\borderrank( t_1 \oplus t_2 \oplus \cdots ) \leq (q+1)^{5s}.
\]
Now the $\tau$-theorem says that if we define $\tau$ by
\[
2^{s} (q^{5s})^\tau  = (q+1)^{5s}
\]
then $\omega_5 \leq 5 \tau$. Therefore,
\[
\omega_5 \leq 5\tau \leq \log_q \frac{(q+1)^{5}}{2}
\]
which gives $\omega_5\leq4.84438$. In general, one gets $\omega_k \leq \log_q \frac{(q+1)^k}{2}$ which is strictly smaller than $k$ for $q$ large enough.
\end{proof}

\raggedright
\bibliographystyle{alpha}
\bibliography{all}

\end{document}